\newcounter{rmrk}[section]
\newcounter{example}[section]
\newtheorem{assumption}{Assumption}[section]
\newcommand{\norm}[1]{\left\Vert#1\right\Vert}
\newcommand{\set}[1]{\left\{#1\right\}}
\newcommand{\mbf}[1]{\mathbf{#1}}
\newcommand{\pspace}{{\Lambda}}
\newcommand{\dspace}{{\mathcal{D}}}
\newcommand{\dspacealpha}{{\mathcal{D}_\alpha}}
\newcommand{\Qspace}{{\mathcal{Q}}}
\newcommand{\pmeas}{\mu_{\pspace}}
\newcommand{\dmeas}{\mu_{\dspace}}
\newcommand{\dmeasalpha}{\mu_{\dspace_\alpha}}
\newcommand{\pborel}{\mathcal{B}_{\pspace}}
\newcommand{\dborel}{\mathcal{B}_{\dspace}}
\newcommand{\dborelalpha}{\mathcal{B}_{\dspace_\alpha}}
\newcommand{\initmeas}{\mathbb{P}_{\pspace}^{\text{init}}}
\newcommand{\initdens}{\pi_{\pspace}^{\text{init}}}
\newcommand{\updens}{\pi_{\pspace}^{\text{up}}}
\newcommand{\predmeas}{\mathbb{P}_{\dspace}^{\text{pred}}}
\newcommand{\preddensalpha}{\pi_{\dspace_\alpha}^{\text{pred}}}
\newcommand{\obsmeasalpha}{\mathbb{P}_{\dspace_\alpha}^{\text{obs}}}
\newcommand{\obsdensalpha}{\pi_{\dspace_\alpha}^{\text{obs}}}
\newcommand{\harmmeas}{\mu_{\pspace}^{\text{hm}}}
\newcommand{\param}{\lambda}
\newcommand{\qmap}{Q}
\title{Optimal Experimental Design Criteria for Data-Consistent Inversion}
\date{\today}
\author{
  Troy Butler\thanks{Department of Mathematical and Statistical Sciences, University of Colorado Denver, Denver, CO 80202 ({\tt Troy.Butler@ucdenver.edu})}
  \and
    John  Jakeman\thanks{Optimization and Uncertainty Quantification Department, Center for Computing Research, Sandia National Labs, Albuquerque, NM 87185.}
  \and
  Michael Pilosov\thanks{Mind the Math, LLC, Denver, CO 80203.}
    \and
  Scott Walsh\thanks{QuantumScape, Denver, CO.}
  \and
    Timothy Wildey\thanks{Computational Mathematics Department, Center for Computing Research, Sandia National Labs, Albuquerque, NM 87185.}
}
\begin{document}
\maketitle

\begin{abstract}

The ability to design effective experiments is crucial for obtaining data that can substantially reduce the uncertainty in the predictions made using computational models. 
An optimal experimental design (OED) refers to the choice of a particular experiment that optimizes a particular design criteria, e.g., maximizing a utility function, which measures the information content of the data. 
However, traditional approaches for optimal experimental design typically require solving a large number of computationally intensive inverse problems to find the data that maximizes the utility function.
Here, we introduce two novel OED criteria that are specifically crafted for the data consistent inversion (DCI) framework, but do not require solving inverse problems. 
DCI is a specific approach for solving a class of stochastic inverse problems by constructing a pullback measure on uncertain parameters from an observed probability measure on the outputs of a quantity of interest (QoI) map. 
While expected information gain (EIG) has been used for both DCI and Bayesian based OED, the characteristics and properties of DCI solutions differ from those of solutions to Bayesian inverse problems which should be reflected in the OED criteria. 
The new design criteria developed in this study, called the expected scaling effect and the expected skewness effect, leverage the geometric structure of pre-images associated with observable data sets, allowing for an intuitive and computationally efficient approach to OED. 
These criteria utilize singular value computations derived from sampled and approximated Jacobians of the experimental designs. 
We present both simultaneous and sequential (greedy) formulations of OED based on these innovative criteria. 
Numerical results demonstrate the effectiveness in our approach for solving stochastic inverse problems. 
\end{abstract}

\begin{keywords}
optimal experimental design, singular values, uncertainty quantification, inverse problems
\end{keywords}


\section{Introduction}
\label{sec:Intro}




Quantification of model uncertainties, especially as they pertain to model predictions, is essential for credible simulation-aided prediction and discovery. 
The uncertainty in model predictions is often sensitive to the characterization of the input uncertainties. 
To characterize uncertainties on model inputs, one often specifies a set of physically plausible ranges or assigns a probability distribution by applying either engineering or domain-specific knowledge. 
Unfortunately, these prior descriptions of uncertainty are typically not validated by data and can produce overly conservative or unreliable estimates of uncertainty on model predictions. 
Consequently, when possible, it is advantageous to solve an inverse problem to condition prior estimates of uncertainty on observational data. 
By carefully selecting the data, solving these inverse problems can significantly reduce model uncertainty.
However, the collection of observational data on quantities of interest (QoI) is often an expensive task.
Furthermore, it is often unclear a priori how large of an impact such QoI data will have on reducing uncertainties on input parameters, and subsequently on any model predictions.
In this paper, we propose new experimental design criteria to determine an optimal experimental design (OED) based on the geometric properties of the associated QoI map induced by the model between parameters and the QoI data.


Simple experimental design approaches, such as factorial or Latin-Hypercube design, produce experiments that are well spaced in the design space defined by the potential QoI maps~\cite{Cox_R_Book_2000,Fisher_Book_1966,Pazman_book_1986}. 
Such approaches do not exploit any knowledge of the physical process they are measuring. 
Using modeling and simulation to guide experimental design can often produce much more effective design criteria~\cite{Korkel_KB_S_OMS_2004,Chung_H_SIAMCO_2012,Bock2013,Horesh2010,Bartuska_ET_CMAME_2022,Foster_JBHTRG_ANIPS_2019,Catanach_D_IEEE_2023}.
In this context, OED often refers to the process of using simulations to select which experiments, defined by their associated QoI maps, to perform in order to collect the associated QoI data.
This simulation-aided OED process generally seeks to optimize a particular metric (sometimes called a utility function), which encodes the intended purpose of the data to be collected. 

When the QoI depend linearly on the model inputs, many design criteria can be expressed as functions of the Fisher information matrix~\cite{Atkinson_book,Chaloner_V_IMS_1995,Kouri_JH_SIAMUQ_2021}. 
For example, A-optimality minimizes the trace of the inverse of the information matrix which minimizes the average variance of the input uncertainty. 
In addition, D-optimality maximizes the determinant of the information matrix, which maximizes the expected information gain (EIG). 
When the QoI depend nonlinearly on the inputs, minimizing criteria based on the information matrix is often insufficient. 
For nonlinear QoI maps, a decision-theoretic approach that maximizes expected utility is often adopted~\cite{Ryan_DMP_ISR_2016}. 
Given a particular QoI map, the standard Bayesian approach for finding the OED is to first approximate, or generate samples from, the posterior densities and use these densities to compute the expected utility of the observable map~\cite{Chaloner_V_IMS_1995, bernardo, vanderberg_bayesianoed}. 
A commonly used utility function in this setting is the EIG~\cite{bernardo}, which is derived from the Kullback-Leibler (KL) divergence~\cite{KL1951, NME5211,KL1951, RenyiKL2014, Huan_2014, WWJ17} and describes the difference between posterior and prior densities.
While the KL divergence is by no means the only way to compare two probability densities, it does provide a reasonable measure of the information gained in the sense of Shannon information \cite{Cover} and is commonly used in Bayesian OED \cite{Huan_2014, NME5211}.



OED procedures should be tailored to the model purpose and the inverse problem used to update model uncertainties. 
In this paper, we present OED strategies that can be used to judiciously choose which QoI to utilize for data consistent inversion (DCI). 
DCI was originally designed to solve a class of stochastic inverse problem that seeks a pullback probability measure to characterize aleatoric uncertainties rather than to inform epistemic uncertainties~\cite{DCI}. 
Existing work on developing OED strategies for DCI adapted ideas from Bayesian inference to utilize the KL divergence as the utility function to find the optimal design~\cite{WWJ17}.
In~\cite{butler_oed4p}, a similar approach was utilized to defined the OED by maximizing the KL divergence on predictions rather than parameters.
However, neither of these approaches for OED exploited the measure-theoretic properties that are foundational to the DCI framework.
Additionally, the particular class of inverse problems solved by DCI are fundamentally different from Bayesian and frequentist inverse problems and thus alternative OED strategies may be required.


The main focus of this work is to analyze the use of specific geometric properties, called skewness and scaling effects, of potential QoI maps as design criteria for OED problems.
These properties are rooted in measure theory, require very few assumptions on both the spaces of model inputs and QoI data (as well as the associated QoI maps), and are developed {\em independently} (and a priori) of any particular type of inverse problem framework.
While it is possible to use the OED obtained by this method to solve different types of inverse problems, such as Bayesian formulations,
this work focuses on the implications of this approach for DCI-based solutions.
This work includes several specific contributions to the field of OED:
\begin{itemize}
\item a rigorous theoretical framework for geometric-based OED;
\item a straightforward approach to evaluate the geometric OED criteria from singular values of sampled Jacobian matrices of the QoI maps;
\item demonstration of the utilization of these geometric OED criteria for simultaneous design where multiple experiments are chosen in one round of optimization;
\item a greedy algorithm for sequential design, which identifies experiments in multiple rounds of optimization based on identifying experiments that provide complementary information relative to the previously chosen experiments.
\end{itemize}
%

The rest of this paper is structured as follows.
Section~\ref{sec:DCI-review} reviews the background material on DCI and provides the notation for this work.
Section~\ref{sec:QoI_practical_restrictions} presents a theoretical argument for performing experiment selection based on the effective dimensional of high-dimensional maps.
Section~\ref{sec:scaling} summarizes the geometric criterion referred to as the {\em scaling effect} of maps, which is related to the measure of inverse events.
Section~\ref{sec:skewness} introduces a separate geometric criterion referred to as the {\em skewness effect} of a map, which is related to the accuracy of approximating inverse events.
This criterion is useful in distinguishing between maps with similar scaling effects.
With potentially different types of criteria to consider, we formally define the OED problem in Section~\ref{sec:oed}.
We then describe skewness as means of quantifying changes to scaling by the addition of new measurements to motivate the greedy algorithm in Section~\ref{subsec:greedy}.
This is followed by an application of both criteria to identify OEDs for the heating of a thin metal rod with uncertain thermal conductivities in Section~\ref{S:Example}.
Section~\ref{sec:greedy_results} demonstrates the greedy algorithm to design a 9-dimensional observable map that is both intuitive and interpretable in its construction.
Concluding remarks and possible future research directions are provided in Section~\ref{sec:Conclusions}.
Proofs of technical results are provided in Appendices~\ref{app:dimension_assump_theorem} and~\ref{app:scaling-skewness-proofs}.


\section{Background, Notation, and Terminology}\label{sec:DCI-review}


Computational models define a particular (vector-valued) map from model inputs to model outputs, which we either refer to as simply the quantities of interest (QoI) map, or just the map if the context is clear.
Let 
\[
\lambda=(\lambda_1,\ldots,\lambda_n)^\top\in\pspace\subset\mathbb{R}^n
\]
denote a sample of uncertain inputs of a simulation model that are contained within the parameter space denoted by $\pspace$.
For simplicity, we refer to all model inputs as either model parameters or just parameters when the context is clear.
Now, let 
\[
Q(\lambda)\in\dspace:=\set{Q(\lambda)\, : \, \lambda\in\pspace} \subset\mathbb{R}^m
\]
define a particular QoI map whose range, denoted by $\dspace$, is called the data space.
We frequently utilize the notation $\qmap^{-1}(A)$ to denote the preimage of the set $A$, i.e.,
\[
\qmap^{-1}(A) := \left\{ \param \in \pspace \ : \ \qmap(\param)\in A \subset \dspace\right\}.
\]
This notation is commonly used in measure theory and does not require the map to be invertible.

Before data are collected, there are numerous (potentially infinite) experimental designs defining which QoI data are to be collected with each corresponding to a different QoI map.
Each experiment consists of a set of observations that the model must be able to predict.
Let $\Qspace$ denote the space of all locally differentiable QoI maps defining the designs under consideration by the modeler. 
We refer to $\Qspace$ as the {\em design space}. 

When general properties of QoI maps are considered in this paper, we usually refer to an arbitrarily selected $Q\in\Qspace$.
However, when it is necessary to specify distinct QoI maps within $\Qspace$, we assume that an index set $\mathcal{A}$ exists such that $\Qspace = \set{Q_{\alpha}}_{\alpha\in\mathcal{A}}$.
Similarly, we use $\dspace_{\alpha}$ to denote the data space associated with the map $Q_{\alpha}$. 
When $Q_{\alpha}$ is vector-valued, we use $Q_{\alpha,i}$ to denote the $i$th component of $Q_{\alpha}$ and $\dspace_{\alpha,i}$ to denote the data subspace associated with this component map.
It is worth noting that correlations may exist between components of any given map, and therefore, we do not expect $\dspace_{\alpha}$ to be a Cartesian product of its associated data subspaces.

Throughout this paper, we use $(\pspace, \pborel, \pmeas)$ and $(\dspacealpha, \dborelalpha, \dmeasalpha)$ to denote the measure spaces, with Borel $\sigma$-algebras, that define the parameter and associated QoI data spaces, respectively.
We also assume that the measures $\pmeas$ and $\dmeasalpha$ are dominating measures for any probability measure used on these spaces, which are useful for defining the Radon-Nikodym derivatives (i.e., probability density functions) of probability measures. 
We use Lebesgue measures in this work; however, this is not a requirement.

While the two novel OED criteria developed in this paper are not limited to utilization with the DCI framework, we focus on demonstrating the impact of these criteria on the solution of DCI-based inverse problems. 
The basic DCI formulation seeks to constrain the aleatoric (irreducible) uncertainty of data on a population using the pullback of an observed probability measure, $\obsmeasalpha$, defined on $(\dspacealpha, \dborelalpha)$ using $\qmap_{\alpha}$. 
In other words, given an observed probability measure, $\obsmeasalpha$, on $(\dspacealpha, \dborelalpha)$, DCI seeks a probability measure, $\mathbb{P}_\pspace$, on $(\Lambda, \mathcal{B}_{\Lambda})$ such that the push-forward of $\mathbb{P}_\pspace$ through $\qmap_\alpha$ matches the observed measure, i.e.,
\begin{equation}
\obsmeasalpha(A) = \mathbb{P}_\pspace(\qmap_{\alpha}^{-1}(A)) \quad \forall A \in \dborelalpha.\label{eq:consistent}
\end{equation}


Following~\cite{DCI}, we construct a unique and stable DCI solution by first defining
an {\em initial} probability measure, $\initmeas$, on $(\Lambda, \mathcal{B}_{\Lambda})$, which we update to satisfy~\eqref{eq:consistent} by comparing its push-forward measure, defined as a prediction, denoted by $\predmeas$, to the observed measure.
As stated above, we assume that all probability measures are absolutely continuous with respect to the dominating measures so that they admit Radon-Nikodym derivatives (which are probability densities in this work).
As in \cite{DCI}, we guarantee the {\em existence} of a solution by making the following {\em predictability assumption} involving these densities:
\begin{assumption}[Predictability Assumption]\label{assump:dom}
\normalfont
For each $Q_\alpha\in\Qspace$, there exists a constant $C_\alpha>0$ such that
\[\obsdensalpha(q)\leq C_p\preddensalpha(q), \quad \text{for a.e.} \hspace{0.1cm} q\in\dspacealpha.\]
\end{assumption}

\Cref{assump:dom} implies that the observed probability measure is absolutely
continuous with respect to the predicted probability measure.
This assumption is also related to the ability to construct numerical approximations of the updated density defining the DCI solution since the constant $C_\alpha$ is the same constant utilized/estimated when performing rejection sampling.
Under the predictability assumption, a disintegration theorem is used to give a unique update to the initial probability measure that defines a DCI solution.
We refer to this probability measure as the {\em updated} measure on $(\Lambda, \mathcal{B}_{\Lambda})$, and its associated updated density is given by
\begin{eqnarray}\label{eq:upd_dens}
\updens(\param) = \initdens(\param)r_\alpha(\param), \quad \text{where} \quad r_\alpha(\param) =\frac{\obsdensalpha(\qmap_\alpha(\param))}{\preddensalpha(\qmap_\alpha(\param))}.
\end{eqnarray}

In practice, the ratio $r_\alpha(\param)$ updates the initial density to construct a solution to the stochastic inverse problem for a particular QoI map.  
The sample average of the ratio (computed from samples generated by the initial distribution) also provides a useful diagnostic for numerical validation of the predictability assumption since the expected value corresponds to integrating the updated density and should therefore be unity.
We refer the interested reader to~\cite{DCI}  for additional details, the full measure-theoretic derivation, and a comparison between the updated density and the standard Bayesian posterior.  

\section{Effective Dimensionality of QoI maps}\label{sec:QoI_practical_restrictions}

Conducting experiments (or collecting data from the field) is often an expensive task that can restrict data acquisition to a limited number of observable QoI, thereby limiting the dimension of the QoI map. 
In the simulation realm, researchers studying OED in a Bayesian context have documented that a limit on the number of useful dimensions in the QoI map becomes quickly evident and motivates the sparsification (i.e., dimension reduction) of experimental designs; e.g., see \cite{APS+14, HHT2010, Haber:2012}.
While these provide some motivation for considering lower-dimensional QoI maps in the OED process, we provide a measure-theoretic justification for an assumption on the limit on the number of practically useful dimensions of the QoI maps considered in the design spaces of this work.

For notational simplicity, we let $Q$ denote an arbitrarily chosen QoI map from the design space and let $J_{Q}(\lambda)$ denote the Jacobian of this map at a point $\lambda\in\pspace$. 
Let $r$ denote the rank of $J_{Q}(\lambda)$,  and assume, at least initially, that this rank is constant over $\pspace$.
Then, $r\leq \min\set{m,n}$, and there exists a piecewise-linear $r$-dimensional map $\widehat{Q}$ that approximates $Q^{-1}(E)$ for all events $E\in\dborel$ to arbitrary accuracy in $\pmeas$-measure as summarized in the following theorem.

\begin{theorem}\label{thm:local_inverse}
	Suppose $Q\in\Qspace$ is an $m$-dimensional map, $J_{Q}(\lambda)$ has constant rank $r$ in $\pspace$, $\pmeas$ is a product measure, and $E$ is any set in $\dborel$.
	For every $\epsilon>0$, there exists a piecewise-linear $r$-dimensional map, $\widehat{Q}:\pspace\to\widehat{\dspace}$, and a finite number $K$ of $r$-dimensional generalized rectangles, $\set{\widehat{E}_k}_{1\leq k\leq K}\subset\widehat{\dborel}$, such that if $\Delta$ denotes the symmetric difference of $Q^{-1}(E)$ and $\bigcup_{1\leq k\leq K}\widehat{Q}^{-1}(\widehat{E}_k)$, then $\pmeas(\Delta)<\epsilon$.   
\end{theorem}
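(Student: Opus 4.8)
The plan is to exploit the constant-rank hypothesis to reduce the statement to a purely $r$-dimensional approximation problem, to replace the arbitrary Borel set $E$ by a benign surrogate \emph{before} doing any linearization, and then to lift a finite family of $r$-dimensional rectangles back through a piecewise-linear surrogate for $Q$, estimating the $\pmeas$-measure of the resulting discrepancy by Fubini's theorem (this last step is where the product structure of $\pmeas$ is used). Throughout I assume, as is standard for the parameter spaces considered here, that $\pspace$ has finite measure; otherwise one first restricts to a bounded box carrying all but $\epsilon/4$ of the $\pmeas$-mass of $Q^{-1}(E)$.

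First I would replace $E$. Let $\nu := Q_*\pmeas$ be the push-forward (predicted) measure on $\dspace$, a finite Borel measure. By regularity there is a finite union of \emph{bounded} $m$-dimensional rectangles $E^\sharp$ with $\nu(E\,\triangle\,E^\sharp)<\epsilon/4$, equivalently $\pmeas\big(Q^{-1}(E)\,\triangle\,Q^{-1}(E^\sharp)\big)<\epsilon/4$. Moreover, only countably many hyperplanes of the form $\{x_i=c\}\subset\mathbb{R}^m$ carry positive $\nu$-mass, so I would choose the faces of $E^\sharp$ to avoid all of them; then $\nu(\partial E^\sharp)=0$, and the $\nu$-mass of the $\rho$-neighborhood of $\partial E^\sharp$ tends to $0$ as $\rho\downarrow 0$. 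This is the step that lets the nonlinearity of $Q$ interact harmlessly with $E$.

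Next I would build the piecewise-linear surrogate and collect errors. Partition the relevant bounded part of $\pspace$ into finitely many small cells $\{P_j\}_{j=1}^{N}$; on $P_j$ with center $c_j$ set $A_j:=J_{Q}(c_j)$ (rank $r$ by hypothesis), take a singular value decomposition $A_j=U_j\Sigma_j V_j^{\!\top}$, and define $\widehat{Q}$ on $P_j$ to be the linear map sending $\lambda$ to the first $r$ coordinates of $V_j^{\!\top}(\lambda-c_j)$, so $\widehat{Q}$ is piecewise-linear with $r$-dimensional range. The affine Taylor approximation $L_j(\lambda):=Q(c_j)+A_j(\lambda-c_j)$ factors as $L_j=g_j\circ\widehat{Q}$ on $P_j$ for an affine injection $g_j:\mathbb{R}^r\to\mathbb{R}^m$, whence $L_j^{-1}(E^\sharp)\cap P_j=\widehat{Q}^{-1}(G_j)\cap P_j$ with $G_j:=g_j^{-1}(E^\sharp)$ a finite union of \emph{bounded} polytopes in $\mathbb{R}^r$; I would approximate $G_j$ by finitely many $r$-dimensional generalized rectangles $\{\widehat{E}_{j,k}\}_k$ with $\mathcal{L}^r\big(G_j\,\triangle\,\bigcup_k\widehat{E}_{j,k}\big)$ as small as desired, and take $\{\widehat{E}_k\}$ to be the finite union over $j$ of these. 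On $P_j$,
\[
Q^{-1}(E)\ \triangle\ \bigcup_k\widehat{Q}^{-1}(\widehat{E}_{j,k})\ \subseteq\ S_1\cup S_2\cup S_3,
\]
with $S_1=Q^{-1}(E)\,\triangle\,Q^{-1}(E^\sharp)$, $S_2=Q^{-1}(E^\sharp)\,\triangle\,L_j^{-1}(E^\sharp)$, and $S_3=\widehat{Q}^{-1}\big(G_j\,\triangle\,\bigcup_k\widehat{E}_{j,k}\big)$. Here $S_1$ is controlled globally by the choice of $E^\sharp$; $S_2$ sits inside the preimage (under a map uniformly $\rho_j$-close to $L_j$, with $\rho_j:=\sup_{P_j}|Q-L_j|\to0$ as the mesh shrinks, using local $C^1$-regularity of $Q$) of the $\rho_j$-neighborhood of $\partial E^\sharp$, whose $\pmeas$-measure is small once the cells are fine because $\nu$ does not charge $\partial E^\sharp$; and $S_3$, in the orthonormal coordinates of $V_j$, is the intersection of $P_j$ with the cylinder $\big(G_j\triangle\bigcup_k\widehat{E}_{j,k}\big)\times\mathbb{R}^{n-r}$, so since $\pmeas$ is a product measure, Fubini gives $\pmeas(S_3)\le C_j\,\mathcal{L}^r\big(G_j\triangle\bigcup_k\widehat{E}_{j,k}\big)$ with $C_j$ bounded in terms of $\operatorname{diam}P_j$. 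Summing the $N$ cell contributions and the global term gives $\pmeas(\Delta)<\epsilon$.

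I expect two points to require the most care. The first, and the genuine obstacle, is that for an \emph{arbitrary} Borel $E$ there is no control of $Q^{-1}(E)\,\triangle\,L_j^{-1}(E)$, so the reduction to a finite union of bounded rectangles $E^\sharp$ whose faces avoid the (at most countably many) hyperplanes charged by $\nu=Q_*\pmeas$ is essential, and the order of the choices—first $E^\sharp$, then the partition mesh fine relative to $E^\sharp$—must be respected. The second is fiddly bookkeeping: merging the cellwise rectangle families into one global family while ensuring that cross-cell preimages $\widehat{Q}^{-1}(\widehat{E}_{j,k})\cap P_{j'}$ with $j'\neq j$ do not spuriously enlarge the symmetric difference; this can be arranged by keeping the estimate cellwise, or by replacing the cellwise linear reductions with a single simplicial approximation of the reduced map that the constant-rank theorem supplies locally, on a sufficiently fine triangulation. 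The constant-rank hypothesis is used to guarantee that each $L_j$ has exactly rank $r$ (so $\widehat{Q}$ genuinely has $r$-dimensional range) uniformly over $\pspace$; the product-measure hypothesis is used only in the final Fubini estimate.
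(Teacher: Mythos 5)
Your argument is essentially sound, but it takes a genuinely different route from the paper's. The paper never regularizes $E$ at all: it applies the standard approximation theorem for product measures (Folland, Thm.~2.40) directly to the parameter-space set $A = Q^{-1}(E)\in\pborel$, obtaining finitely many generalized rectangles $A_k$ with $\pmeas(A\,\triangle\,\bigcup_k A_k)<\epsilon$, then linearizes $Q$ at one point $\lambda^{(k)}$ chosen in each $A_k$, reduces to $r$ dimensions via the constant-rank hypothesis, and simply sets $\widehat{E}_k := \widehat{Q}(A_k)$. Because the approximation happens downstream of $Q^{-1}$, the arbitrariness of $E$ and the interaction of the nonlinearity with $\partial E$ never enter; the product-measure hypothesis is consumed by the rectangle-approximation lemma in $\pspace$, not by a Fubini estimate. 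Your approach---regularize $E$ to $E^\sharp$ in data space using $\nu=Q_*\pmeas$, linearize on a fine mesh, and control $S_1,S_2,S_3$ separately---is longer, but it buys a more constructive $\widehat{Q}$ (defined on an explicit mesh rather than on the abstract rectangles approximating $Q^{-1}(E)$) and genuinely $r$-dimensional \emph{rectangles} $\widehat{E}_{j,k}$; note that the paper's $\widehat{E}_k=\widehat{Q}(A_k)$ is a parallelepiped rather than a generalized rectangle, and the paper is silent on the over-counting $\widehat{Q}^{-1}(\widehat{E}_k)\supsetneq A_k$, so the bookkeeping concern you raise applies to the published proof as well. (Both versions are repaired the same way: translate the ranges of the distinct linear pieces far apart so that $\widehat{Q}^{-1}(\widehat{E}_{j,k})\subseteq P_j$.)

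Two caveats on your version. First, in the $S_3$ estimate, $\pmeas$ being a product measure in the \emph{original} coordinates does not yield a Fubini decomposition in the rotated coordinates of $V_j$, so the bound $\pmeas(S_3)\le C_j\,\mathcal{L}^r\bigl(G_j\,\triangle\,\bigcup_k\widehat{E}_{j,k}\bigr)$ really requires $\pmeas$ to have a locally bounded density with respect to Lebesgue measure on $P_j$; this holds for the Lebesgue measure the paper actually uses, but it is not a consequence of ``product measure'' alone and should be stated. Second, the $S_2$ estimate needs $\rho_j=\sup_{P_j}|Q-L_j|\to 0$ uniformly over the mesh, i.e., $C^1$ regularity on a compact set carrying all but a small fraction of the mass of $Q^{-1}(E)$; the hypothesis that maps in $\Qspace$ are locally differentiable should be read as supplying this. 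Neither caveat is fatal, but both must be addressed if the argument is written out in full.
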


\begin{proof}
    See Appendix~\ref{app:dimension_assump_theorem}.
\end{proof}

We focus on two important implications of this result in the context of this work. 
First, Theorem~\ref{thm:local_inverse} implies that it is theoretically possible to post-process any observable map $Q$ as an $r$-dimensional map $\widehat{Q}$ that maintains the accuracy of inverses of observable events.
Theorem~\ref{thm:local_inverse} also implies that we can replace any computations of the geometric effects defining utility functions on $Q$ with utility functions involving $\widehat{Q}$.
It follows that since $r$ is bounded above by the minimum of $n$ (i.e., $\dim(\pspace)$) and $m$ (i.e., $\dim(\dspace_Q)$) that we can assume with no loss of generality that $m\le n$.

It is worth noting that there are situations where an intermediary step exists such that the $m$ QoI are not specified but rather are learned from high-dimensional and noisy data sets, for instance, through the application of feature extraction techniques such as kernel Principal Component Analysis (kPCA), which is common in the machine learning and data science communities, e.g., see~\cite{pearson1901liii,scholkopf1997kernel,mika1999kernel,abdi2010principal}. 
The utilization of kPCA within the DCI framework to learn lower-dimensional QoI from noisy temporal or spatio-temporal data sets has been a topic of recent study, e.g., see~\cite{MSB+22, RHB25}.
Other work has leveraged kPCA to develop tools for the quantitative analysis of the geometry of samples from a probability distribution in a high-dimensional Euclidean space, which is approximately supported on a low-dimensional set, and is corrupted by high-dimensional noise~\cite{LITTLE2017}. 
While we assume the QoI maps defining the design space are specified in this work, a future study will consider the utilization of the OED approaches developed in this work to optimize the data acquisition studies that subsequently lead to data-derived and data-learned QoI.

\section{Scaling Effect}\label{sec:scaling}

Intuitively, we expect that designs exhibiting large sensitivities to variations in the parameters are the most informative and useful in DCI.
For example, suppose we have $m$ measurement all of the same type, which we can configure in any of $K$ ($K>m$) ways to collect data leading to a design space $\Qspace$ of size $K\choose m$.
For a hypothetical measured QoI value, we further assume the uncertainty is quantified in terms of sets whose measures are independent of the specific configuration of measurement devices (i.e., independent of the QoI map). 
This leads to descriptions of uncertainty in a particular output value in terms of sets of a fixed measure.
To determine the optimal design among the $K\choose m$ candidates, we seek the configuration of devices that leads, on average, to the smallest sets of parameters associated with these output sets of fixed size. 
To motivate the use of singular values to quantify the scaling effect, suppose initially that $Q\in\Qspace$ is linear and a bijection so that $m=n$.  
Then, there exists a full-rank $n\times n$ matrix $J$, such that $Q(\lambda)=J\lambda$.
If $\Lambda=\mathbb{R}^n$, and we define $E$ to be an event in the associated data space given by a generalized rectangle, then $Q^{-1}(E)$ is given by a parallelepiped.
From standard results in measure theory and linear algebra, the measure of this parallelepiped is given by
\begin{equation}\label{eq:fudnamental_inverse_scaling}
	\pmeas(Q^{-1}(E)) =  \dmeas(E)\det(J^{-1}) = \dmeas(E)\left(\prod_{k=1}^{m} \, \sigma_{k}\right)^{-1},
\end{equation}
where $\set{\sigma_k}_{1\leq k\leq m}$ are the singular values of $J$.
In the following subsection, we extend this concept to cases where $m<n$ and nonlinear maps.
\subsection{Quantifying Scaling with SVDs}\label{sec:scaling_quantification}
Here, we propose a simple quantitative metric, referred to as the scaling effect, that describes the precision of using a map $Q\in\Qspace$ to identify parameters that map to an event $E$ in the associated data space. 
This metric is easily computed using the singular value decomposition (SVD) of the Jacobian $J_{Q}(\lambda)$ of the map $Q$ at samples of the parameters.

If $m<n$ and $E$ is a generalized rectangle, then $Q^{-1}(E)$ is defined by a cylinder set in $\pspace$ with cross-sections given by $m$-dimensional parallelepipeds. 
The following lemma and ensuing corollary below are used to describe the measures of $m$-dimensional parallelepipeds embedded in $n$-dimensional Euclidean spaces assuming that $m\leq n$.
We are specifically interested in two cases: (i) $m$-dimensional parallelepipeds defined by the rows of a given matrix $J$, and (ii) $m$-dimensional parallelepipeds determined by the cross-sections of $n$-dimensional cylinders given by the pre-image of a $m$-dimensional unit cube under $J$.

\begin{lemma}\label{lem:prod_singvals}
Let $J$ be a full rank $m\times n$ matrix with $m\leq n$, and $Pa(J)$ denote the $m$-dimensional parallelepiped defined by the $m$ rows of $J$. 
The Lebesgue measure $\mu_m$ in $\mathbb{R}^m$ of $Pa(J)$ is given by the product of the $m$ singular values $\set{\sigma_k}_{k=1}^m$ of $J$, i.e., 
\begin{equation}\label{eq:vol_Pa(J)}
    \mu_m(Pa(J)) = \prod_{k=1}^m \sigma_k.
\end{equation}
\end{lemma}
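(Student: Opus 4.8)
The plan is to reduce the $m$-dimensional volume of $Pa(J)$ to the determinant of a square matrix, using the fact that $m$-volume is invariant under orthogonal transformations of $\mathbb{R}^n$, and then to express that determinant in terms of the singular values via the Gram matrix $JJ^\top$.

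First I would recall the classical fact that the $m$-dimensional Lebesgue measure of the parallelepiped spanned by vectors $v_1,\dots,v_m\in\mathbb{R}^n$ (with $m\le n$) equals $\sqrt{\det G}$, where $G$ is the Gram matrix with entries $G_{ij}=\langle v_i,v_j\rangle$. Taking $v_1,\dots,v_m$ to be the rows of $J$ gives $G = JJ^\top$ and hence $\mu_m(Pa(J)) = \sqrt{\det(JJ^\top)}$. If one prefers a self-contained argument rather than invoking the Gram-determinant formula, one can instead choose an orthogonal matrix $R\in\mathbb{R}^{n\times n}$ carrying $\mathrm{span}\{v_1,\dots,v_m\}$ onto $\mathbb{R}^m\times\{0\}$; since the map $v\mapsto Rv$ is an isometry it preserves $m$-volume, the rows of $JR^\top$ all lie in $\mathbb{R}^m\times\{0\}$, and the $m$-volume of the image parallelepiped equals $|\det M|$, where $M\in\mathbb{R}^{m\times m}$ is the nonzero block of $JR^\top$. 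Because $MM^\top = JR^\top R J^\top = JJ^\top$, this again yields $\mu_m(Pa(J))^2 = \det(JJ^\top)$.

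Next I would introduce the singular value decomposition $J = U\Sigma V^\top$, with $U\in\mathbb{R}^{m\times m}$ and $V\in\mathbb{R}^{n\times n}$ orthogonal and $\Sigma\in\mathbb{R}^{m\times n}$ carrying the singular values $\sigma_1,\dots,\sigma_m$ on its main diagonal and zeros elsewhere. Then $JJ^\top = U\Sigma V^\top V \Sigma^\top U^\top = U(\Sigma\Sigma^\top)U^\top$ with $\Sigma\Sigma^\top = \mathrm{diag}(\sigma_1^2,\dots,\sigma_m^2)$, so $\det(JJ^\top) = \prod_{k=1}^m \sigma_k^2$. Combining this with the previous step, and using that the $\sigma_k$ are nonnegative (in fact strictly positive since $J$ has full rank $m$), gives $\mu_m(Pa(J)) = \sqrt{\prod_{k=1}^m\sigma_k^2} = \prod_{k=1}^m\sigma_k$, which is the claim.

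The only real obstacle is the first step, namely the identification of the $m$-volume of a lower-dimensional parallelepiped embedded in $\mathbb{R}^n$ with $\sqrt{\det(JJ^\top)}$; this is standard (it follows from the base-times-height induction, or equivalently from the area formula), and the orthogonal-reduction argument sketched above provides a short proof if a self-contained treatment is desired. Everything after that is routine SVD bookkeeping.
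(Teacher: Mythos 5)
Your proof is correct and rests on the same underlying idea as the paper's: an isometry of $\mathbb{R}^n$ preserves $m$-dimensional volume, so the embedded parallelepiped reduces to one described by a square matrix whose volume is the product of the singular values of $J$. The paper packages this via the reduced QR factorization $J^\top = \tilde{Q}R$ (with $\tilde{Q}$ the isometry and $R$ the $m\times m$ factor sharing the singular values of $J$), whereas you use the Gram-determinant identity $\mu_m(Pa(J)) = \sqrt{\det(JJ^\top)}$ together with an explicit SVD to evaluate $\det(JJ^\top) = \prod_{k=1}^m \sigma_k^2$; the two routes are interchangeable, and your orthogonal-reduction justification of the Gram formula is essentially the paper's isometry step in different notation.
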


\begin{proof}
    See Appendix~\ref{app:scaling-skewness-proofs}.
\end{proof}

We now turn our attention to the second case of describing the size of a $m$-dimensional parallelepiped defined by the cross-section of an $n$-dimensional cylinder given by the pre-image of a $m$-dimensional unit cube under $J$.
In this case, we utilize the pseudo-inverse of $J$, given by $J^\dagger = J^{\top}(JJ^{\top})^{-1}$.  
As is evident from the formula of the pseudo-inverse, the range of $J^\dagger$ is defined by the row space of $J$.
This observation implies that the $\mu_m$-measure of the cross-section of the pre-image of a unit cube under $J$ is equal to the the $\mu_m$-measure of the parallelepiped defined by the columns of $J^\dagger$.

\begin{corollary}\label{cor:cross_section}
Let $J$ be a full rank $m\times n$ matrix with $m\leq n$.  Then, $Pa((J^\dagger)^{\top})$ is an $m$-dimensional parallelepiped defining a cross-section of the pre-image of an $m$-dimensional unit cube under $J$, and its Lebesgue measure $\mu_m$ is given by the inverse of the product of the $m$ singular values $\set{\sigma_k}_{k=1}^m$ of $J$, i.e., 
\begin{equation}
    \mu_m(Pa((J^\dagger)^{\top})) = \left(\prod_{k=1}^{m} \, \sigma_{k}\right)^{-1}.
\end{equation}
\end{corollary}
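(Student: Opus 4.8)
The plan is to derive Corollary~\ref{cor:cross_section} as a direct consequence of Lemma~\ref{lem:prod_singvals} together with the standard relationship between the singular values of a full-rank matrix and those of its pseudo-inverse. First I would recall that for a full-rank $m\times n$ matrix $J$ with $m\le n$, the pseudo-inverse $J^\dagger = J^\top(JJ^\top)^{-1}$ is an $n\times m$ matrix whose nonzero singular values are exactly the reciprocals $\set{\sigma_k^{-1}}_{k=1}^m$ of the singular values of $J$; this follows immediately from substituting the (thin) SVD $J = U\Sigma V^\top$ into the formula for $J^\dagger$ and simplifying to $J^\dagger = V\Sigma^{-1}U^\top$. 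Consequently $(J^\dagger)^\top$ is an $m\times n$ matrix that is itself full rank, and its $m$ singular values are again $\set{\sigma_k^{-1}}_{k=1}^m$ (transposition does not change singular values).

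Next I would apply Lemma~\ref{lem:prod_singvals} to the matrix $(J^\dagger)^\top$ in place of $J$: since $(J^\dagger)^\top$ is a full-rank $m\times n$ matrix, the lemma gives
\begin{equation*}
	\mu_m\!\left(Pa((J^\dagger)^{\top})\right) = \prod_{k=1}^m \sigma_k^{-1} = \left(\prod_{k=1}^{m}\sigma_k\right)^{-1},
\end{equation*}
which is precisely the claimed measure formula. The remaining task is the geometric identification: that $Pa((J^\dagger)^\top)$ really does describe a cross-section of $Q^{-1}(E)$ when $E$ is an $m$-dimensional unit cube. For this I would argue that the row space of $J$ (equivalently, the range of $J^\dagger$, as noted just before the corollary statement) is an $m$-dimensional subspace of $\mathbb{R}^n$ that intersects each cross-section of the cylinder $J^{-1}(E)$ transversally, and that the restriction of $J$ to this subspace is the invertible linear map whose matrix, in the orthonormal basis given by the columns of $V$, has inverse with columns $\set{\sigma_k^{-1} v_k}$; the image of the unit cube under this inverse restriction is exactly the parallelepiped spanned by the columns of $J^\dagger$, i.e. $Pa((J^\dagger)^\top)$. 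So the cross-section measure computation reduces to the $m=n$ bijective case already handled by~\eqref{eq:fudnamental_inverse_scaling}, now applied to the restricted map.

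The main obstacle I anticipate is not the algebra — the singular-value reciprocal fact and the invocation of Lemma~\ref{lem:prod_singvals} are routine — but rather making the phrase ``cross-section of the pre-image of an $m$-dimensional unit cube under $J$'' precise enough to verify. One must fix which cross-section is meant (the natural choice being the slice of $J^{-1}(E)$ lying in the row space of $J$, i.e. orthogonal to $\ker J$), confirm that this slice is genuinely a translate of $Pa((J^\dagger)^\top)$ rather than merely having the same $\mu_m$-measure, and ensure the measure $\mu_m$ on an $m$-dimensional affine subspace of $\mathbb{R}^n$ is the one induced by the Euclidean structure (so that Lemma~\ref{lem:prod_singvals}, stated for $\mathbb{R}^m$, transfers correctly via an orthogonal change of coordinates). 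Once these identifications are pinned down, the proof is a two-line corollary of the lemma.
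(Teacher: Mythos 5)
Your proposal is correct and follows essentially the same route as the paper's proof: identify the column space of $J^\dagger$ with the row space of $J$ to realize $Pa((J^\dagger)^{\top})$ as an orthogonal cross-section of the pre-image cylinder, note that the singular values of $(J^\dagger)^{\top}$ are the reciprocals $\sigma_k^{-1}$, and apply Lemma~\ref{lem:prod_singvals} to $(J^\dagger)^{\top}$. Your additional remarks on pinning down which cross-section is meant and on transferring $\mu_m$ to the affine slice are reasonable refinements of points the paper treats only briefly, but they do not change the argument.
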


\begin{proof}
    See Appendix~\ref{app:scaling-skewness-proofs}.
\end{proof}

Corollary~\ref{cor:cross_section} provides a means to compute the measure of these $m$-dimensional parallelepipeds for a given Jacobian $J_Q(\lambda)$ of a map $Q\in\Qspace$. 
With the goal of describing the precision of using the map $Q\in\Qspace$ to identify parameters that map to $E$ we introduce the following definition.
\begin{definition}\label{def:local-scaling}
For any $Q\in\Qspace$ and letting $J_{Q}(\lambda)\in\mathbb{R}^{m\times n}$ denote the Jacobian of $Q$ at a point $\lambda\in\pspace$ with singular values  $\set{\sigma_k}_{1\leq k\leq m}$, we define the {\bf local scaling effect} of $Q$ as
\begin{equation}\label{eq:scaling_effect}
	SE_Q(\lambda) := \begin{cases}
						\left(\prod_{k=1}^{m} \, \sigma_{k}\right)^{-1}, &  \text{ if } \sigma_k>0, 1\leq k\leq m, \\
						+\infty, & \text{ else.}
					\end{cases}
\end{equation}
\end{definition}

\subsection{Expected Scaling Effect}\label{sec:ESE}


Unlike the inverse sets for linear maps, the measure of an inverse set associated with a nonlinear map depends on the model parameters $\lambda$. 
Specifically, different output events $E$ of the same $\dmeas$-measure may correspond to events in $\pspace$ with different $\pmeas$-measure. 
This parameter dependence of the $\pmeas$-measure of the inverse sets for nonlinear maps motivates an integral-based global scaling effect of a map to define a type of expected scaling effect. 
Before giving this definition, we first address the scenario where the Jacobian of $Q\in\Qspace$ has rank $r<m$ for some $\lambda\in\pspace$.
At such parameters, we formally define $SE_Q(\lambda)=+\infty$. 
In order to both accommodate this scenario and also reduce the impact of statistical outliers (since they are unlikely to be observed in practice) in the definition of the expected scaling effect below, we use the harmonic mean.
Specifically, if $\lambda$ is a random variable and $f(\lambda)>0$, then the harmonic mean of $f$ is given by
\begin{equation*}
	H(f) := \Bigg(\frac{1}{\harmmeas(\pspace)}\int_{\pspace} \frac{1}{f(\lambda)} \, d\harmmeas \Bigg)^{-1},
\end{equation*}
where $\mu^{\text{hm}}_\pspace$ is any measure on $(\pspace,\pborel)$ used to compute the harmonic mean.
We typically take $\mu^{\text{hm}}_\pspace$ to be the volume measure, i.e., $\harmmeas = \pmeas$, or the initial measure, i.e., $\harmmeas = \initmeas$.
With this harmonic mean, we now define the expected scaling effect.
\begin{definition}\label{def:scaling}
For any $Q\in\Qspace$ and $\lambda\in\pspace$, we define the {\bf expected scaling effect (ESE)} as
\begin{equation}\label{Eq:scaling_average}
	ESE(Q) := H(SE_Q(\lambda)) =  \Bigg(\frac{1}{\harmmeas(\pspace)}\int_{\pspace} \frac{1}{SE_Q(\lambda)} \, d\harmmeas\Bigg)^{-1}.
\end{equation}
\end{definition}

Any numerical integration technique can be used to estimate $ESE(Q)$. 
In this paper, we use standard Monte Carlo techniques based on finite sampling to obtain estimates of Jacobians. 
This expected scaling effect will be utilized in Section~\ref{sec:oed} to define an OED formulation.
\section{Skewness Effect}\label{sec:skewness}

Skewness of maps is originally introduced and studied in \cite{MWR_Lindley,Inlet} and refers to the degree to which the interior angles between connected sides in a generalized rectangle $E\in\dborel$ become less perpendicular when transformed into the pre-image set $Q^{-1}(E)$. 
In these other works, it is shown that increasing the skewness of the QoI map, $Q$, directly results in an increased number of samples in $\pspace$ required to accurately estimate the $\pmeas$-measure of $Q^{-1}(E)$ within a given tolerance.
However, in those previous works, no efficient computational approach to estimate the skewness of a map is provided.
In the remainder of this section, we show that, as with scaling, the singular values of the Jacobian map are able to estimate the skewness of a map. 
\subsection{Quantifying Skewness with SVDs}\label{sec:skewness_quantifying}

To more precisely connect skewness to parallelepipeds, suppose initially that $Q$ is linear and $E\in\dborel$ is a generalized rectangle.
We then have that the Jacobian $J=J_Q(\lambda)$ is independent of $\lambda$, and $Q=J_Q\lambda$.
Let $J_{Q,k}$ denote the submatrix of $J_{Q}$ with the $k$th row, denoted by $\mbf{j}_k$, removed and define the map $\widehat{Q} := J_{Q,k}\lambda$, which maps $\pspace$ to a $(m-1)$-dimensional data space $\widehat{\dspace}$. 
Let $P$ denote the projection matrix from $\dspace$ to $\widehat{\dspace}$.
Then, $Q^{-1}(E)$ is a cylinder with cross-sections given by $m$-dimensional parallelepipeds, and $\widehat{Q}^{-1}(P{E})$ is a cylinder with cross-sections given by $(m-1)$-dimensional parallelepipeds.
By construction, $Q^{-1}(E)\subset \widehat{Q}^{-1}(P{E})$, and the $(m-1)$-dimensional parallelepiped cross-sections of the cylinder described by $\widehat{Q}^{-1}(P{E})$ are faces of the $m$-dimensional parallelepipeds defining the cross-sections of $Q^{-1}(E)$.
In other words, the use of all $m$ component maps {\em truncates} the cylinder given by $\widehat{Q}^{-1}(P{E})$ in a particular direction.
To quantify this truncation, we first note that $\mbf{j}_k$ can be decomposed as $\mbf{j}_k=\mbf{j}_k^{0}+\mbf{j}_k^\perp$ where $\mbf{j}_k^{0}$ denotes the projection of $\mbf{j}_k$ into the vector subspace spanned by the other row vectors of $J_Q$, and $\mbf{j}_k^\perp$ denotes the part of $\mbf{j}_k$ that is orthogonal to this projection.
With this notation, the {\em length} of the truncation of the cylinder set $\widehat{Q}^{-1}(PE)$ can be quantified by $\|\mbf{j}_k^\perp\|$ (where $\|\cdot \|$ denotes the Euclidian norm), which conceptually corresponds to the amount of additional, or unique, information contained in the $k$th observed component used to define $Q$ compared to all the other components of $Q$. 

In the more general case where $Q$ is nonlinear, we use the Jacobian of $Q$ to define the local skewness of the map $Q\in\Qspace$ as originally introduced in \cite{Inlet}. 
\begin{definition}\label{def:skew}
For any $Q\in\Qspace$, $\lambda\in\pspace$, and $1\leq k\leq m$, denote by $\mbf{j}_k$ the $k$th row of the Jacobian $J_{Q}(\lambda)$. 
The skewness vector, $\mbf{SK_Q}(\lambda)$, is an $m$-dimensional vector-valued function whose $k$th component is given by
	\begin{equation}\label{Eq:skew}
        [\mbf{SK_Q}(\lambda)]_k := \begin{cases} \frac{\|\mbf{j}_k\|}{\|\mbf{j}_k^{\perp}\|}, & \|\mbf{j}_k^{\perp}\|\neq 0, \\
        								\infty, & \text{else}, 
        								\end{cases} \ 1\leq k\leq m.
	\end{equation}
    Then, we define the {\bf local skewness} of the map $Q\in\Qspace$ at a point $\lambda$ as
	\begin{equation}\label{Eq:Skew}
        SK_Q(\lambda) = \norm{\mbf{SK_Q}(\lambda)}_\infty,
	\end{equation}
 
\end{definition}

In the above definition, the $k$th component of the skewness vector, $[\mbf{SK_Q}(\lambda)]_k$, describes the amount of {\em redundant} information present in the $k$th component of the observable map compared to what is present in the other $m-1$ components when constructing the pre-image near the point $\lambda\in\pspace$. 
This observation implies that the smallest value any component of this vector can take is one, which is associated with a Jacobian of $Q$ containing orthogonal rows.

While the above definition is arrived at via a geometrical argument, it is not entirely practical from a computational perspective given the reliance on decomposing many vectors in distinct subspaces to determine their orthogonal components. 
However, the geometric argument is based on parallelepipeds embedded in the parameter space.
With this perspective, the following fundamental decomposition result proves critical to computing skewness with SVDs.

\begin{theorem}\label{thm:fundamental_decomp}
Let $J$ be a full rank $m\times n$ matrix with $m\leq n$, and $Pa(J)$ denote the $m$-dimensional parallelepiped defined by the $m$ row vectors $\set{\mbf{j}_1, \hdots, \mbf{j}_m}\subset\mathbb{R}^n$ of $J$. 
There exists $\mbf{j}_1^{\perp}, \mbf{j}_1^0\in\mathbb{R}^n$ such that
    \[
        \mbf{j}_1 = \mbf{j}_1^{\perp} + \mbf{j}_1^0, \hskip 10pt \mbf{j}_1^{\perp}\perp \mbf{j}_1^0, \hskip 10pt \mbf{j}_1^0\in span\{\mbf{j}_2, \hdots, \mbf{j}_m\},    
    \]
    and
	\[
        \mu_m(Pa(J)) = \|\mbf{j}_1^{\perp}\| \times \mu_{m-1}(Pa(J_1)),
	\]
	where 
	\begin{itemize}
		\item $J_1$ denotes the submatrix of $J$ with the first row removed,  
		\item $Pa(J_1)$ is the $(m-1)$-dimensional parallelepiped defined by the $m-1$ row vectors of $J_1$, and
		\item $\mu_m$ and $\mu_{m-1}$ represent the $m$- and $(m-1)$-dimensional Lebesgue measures, respectively. 
	\end{itemize}
\end{theorem}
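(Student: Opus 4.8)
The plan is to prove the decomposition of $\mbf{j}_1$ by elementary linear algebra and then obtain the volume factorization from the standard Gram-determinant formula for parallelepiped volume, combined with Lemma~\ref{lem:prod_singvals} (or simply with the Gram determinant directly). First I would set $V := \mathrm{span}\{\mbf{j}_2,\ldots,\mbf{j}_m\}$; since $J$ has full rank $m$, the vectors $\mbf{j}_2,\ldots,\mbf{j}_m$ are linearly independent, so $V$ is an $(m-1)$-dimensional subspace of $\mathbb{R}^n$. Let $\mbf{j}_1^0$ be the orthogonal projection of $\mbf{j}_1$ onto $V$ and set $\mbf{j}_1^{\perp} := \mbf{j}_1 - \mbf{j}_1^0$. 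By construction $\mbf{j}_1 = \mbf{j}_1^{\perp} + \mbf{j}_1^0$, $\mbf{j}_1^0 \in V$, and $\mbf{j}_1^{\perp}$ is orthogonal to $V$, hence in particular $\mbf{j}_1^{\perp} \perp \mbf{j}_1^0$. Moreover $\mbf{j}_1^{\perp} \neq 0$: otherwise $\mbf{j}_1 \in V$, contradicting the linear independence of the rows of $J$. This establishes the first displayed claim.

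For the volume identity I would use the Gram matrix characterization: for any vectors $\mbf{v}_1,\ldots,\mbf{v}_m \in \mathbb{R}^n$ spanning an $m$-dimensional parallelepiped, $\mu_m(Pa(\mbf{v}_1,\ldots,\mbf{v}_m))^2 = \det(G)$ where $G_{ij} = \langle \mbf{v}_i, \mbf{v}_j\rangle$ is the Gram matrix. Writing $G$ for the Gram matrix of $\{\mbf{j}_1,\ldots,\mbf{j}_m\}$ and $G_1$ for that of $\{\mbf{j}_2,\ldots,\mbf{j}_m\}$, the key step is the block-determinant computation: replacing the first row $\mbf{j}_1$ by $\mbf{j}_1^{\perp}$ changes the parallelepiped by a shear (adding a multiple-of, in fact the projection onto, the span of the remaining rows), which is volume-preserving; equivalently, row-reducing $G$ by subtracting from the first row the appropriate combination of the others does not change $\det G$. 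After this reduction the first row (and, by symmetry of the argument, the first column) decouples because $\mbf{j}_1^{\perp} \perp \mbf{j}_k$ for $k \geq 2$, leaving a block-diagonal matrix $\mathrm{diag}(\|\mbf{j}_1^{\perp}\|^2,\, G_1)$. Taking determinants and then square roots yields
\begin{equation*}
\mu_m(Pa(J)) = \sqrt{\det G} = \sqrt{\|\mbf{j}_1^{\perp}\|^2 \det G_1} = \|\mbf{j}_1^{\perp}\| \cdot \mu_{m-1}(Pa(J_1)),
\end{equation*}
which is exactly the asserted factorization. Note this step also confirms $G_1$ is nonsingular (full rank of $J_1$) so $Pa(J_1)$ is genuinely $(m-1)$-dimensional.

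An alternative I would keep in mind is a purely geometric argument: $\mu_m(Pa(J))$ equals the $(m-1)$-volume of the base $Pa(J_1)$ times the height, and the height of $\mbf{j}_1$ above the hyperplane $V$ is by definition $\|\mbf{j}_1^{\perp}\|$; this is just the Gram computation phrased without coordinates, and either presentation is fine. The main obstacle—though it is mild—is being careful about the shear step: one must verify that subtracting $\mbf{j}_1^0 \in \mathrm{span}\{\mbf{j}_2,\ldots,\mbf{j}_m\}$ from $\mbf{j}_1$ is realized by multiplying $J$ on the left by a unipotent (hence determinant-one, measure-preserving) elementary matrix, so that $Pa(J)$ and $Pa(\mbf{j}_1^{\perp}, \mbf{j}_2, \ldots, \mbf{j}_m)$ have the same $\mu_m$-measure; on the Gram side this is the statement that the row operation $R_1 \mapsto R_1 - \sum_{k\geq 2} c_k R_k$ leaves $\det G$ invariant. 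Everything else is routine, and the result follows.
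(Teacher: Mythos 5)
Your proof is correct. Note that the paper does not actually prove this statement itself --- it defers entirely to the reference \cite{Crampin_Pirani} --- so there is no in-paper argument to compare against; your Gram-determinant derivation is the standard self-contained one and would serve as a complete replacement for the citation. All the key points are in order: full rank of $J$ guarantees $\mbf{j}_1\notin\mathrm{span}\{\mbf{j}_2,\ldots,\mbf{j}_m\}$ so $\mbf{j}_1^{\perp}\neq 0$; the shear $R_1\mapsto R_1-\sum_{k\geq 2}c_kR_k$ acts on the Gram matrix as $G\mapsto EGE^{\top}$ with $E$ unipotent, preserving the determinant; and orthogonality of $\mbf{j}_1^{\perp}$ to all of $V$ (not merely to $\mbf{j}_1^0$) is what makes the reduced Gram matrix block-diagonal. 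Your approach also dovetails cleanly with Lemma~\ref{lem:prod_singvals}, since $\det G=\det(JJ^{\top})=\prod_{k=1}^{m}\sigma_k^2$, so the Gram-determinant identity $\mu_m(Pa(J))=\sqrt{\det G}$ is exactly the statement of that lemma; the only mild caution is to avoid circularity by proving the Gram formula independently (e.g., via the isometry/QR argument the paper uses for the lemma) rather than quoting the lemma and the Gram identity as if they were separate facts.
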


\begin{proof}
    The proof is provided in \cite{Crampin_Pirani}.
\end{proof}

The following result summarizes the method for computing the skewness in terms of singular values of the Jacobian of $Q$.

\begin{corollary}\label{Cor:singular_values}
    For any $Q\in\Qspace$, let $\mbf{j}_k$ denote the $k$th row of the Jacobian $J_{Q}(\lambda)$, then
    	\begin{equation}
    		SK_Q(\lambda) =  \max_{1\leq k\leq m} \, 	\frac{\|\mbf{j}_k\|\prod_{i=1}^{m-1}\sigma_{k,i}}{\prod_{k=1}^{m}\sigma_k},
    \end{equation}
 where $\set{\sigma_k}_{1\leq k\leq m}$ and $\set{\sigma_{k,i}}_{1\leq i\leq m-1}$ denote, respectively, the singular values of the Jacobian $J_Q({\lambda})\in\mathbb{R}^{m\times n}$ and the submatrix $J_{Q,k}(\lambda)\in\mathbb{R}^{(m-1)\times n}$ defined by omitting the $k$th row from $J_{Q}(\lambda)$.
\end{corollary}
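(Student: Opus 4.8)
The plan is to combine Definition~\ref{def:skew}, which expresses the local skewness as a maximum over $k$ of $\|\mbf{j}_k\|/\|\mbf{j}_k^\perp\|$, with Theorem~\ref{thm:fundamental_decomp} to rewrite each orthogonal-component norm $\|\mbf{j}_k^\perp\|$ purely in terms of singular values. The only slight wrinkle is that Theorem~\ref{thm:fundamental_decomp} is stated for removal of the \emph{first} row, whereas we need the analogous statement for removal of an arbitrary $k$th row; this is handled by noting that $Pa(J)$ does not depend on the ordering of the rows (permuting rows is an orthogonal reindexing that leaves the $m$-dimensional Lebesgue measure of the spanned parallelepiped unchanged), so applying the theorem to the permuted matrix that moves row $k$ to the top gives, for every $1\le k\le m$,
\[
\mu_m(Pa(J)) = \|\mbf{j}_k^\perp\| \cdot \mu_{m-1}(Pa(J_k)),
\]
where $J_k = J_{Q,k}(\lambda)$ is the submatrix with the $k$th row removed and $\mbf{j}_k^\perp$ is the component of $\mbf{j}_k$ orthogonal to $\mathrm{span}\{\mbf{j}_i : i\neq k\}$ — exactly the quantity appearing in Definition~\ref{def:skew}.

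Next I would invoke Lemma~\ref{lem:prod_singvals} twice. Applied to the full-rank $m\times n$ matrix $J = J_Q(\lambda)$ it gives $\mu_m(Pa(J)) = \prod_{k=1}^m \sigma_k$; applied to the $(m-1)\times n$ submatrix $J_k = J_{Q,k}(\lambda)$ it gives $\mu_{m-1}(Pa(J_k)) = \prod_{i=1}^{m-1}\sigma_{k,i}$, where $\set{\sigma_{k,i}}_{1\le i\le m-1}$ are the singular values of $J_{Q,k}(\lambda)$. (Here I would note that $J$ having full rank $m$ forces each $J_k$ to have full rank $m-1$, so Lemma~\ref{lem:prod_singvals} applies and all the relevant products are strictly positive; this is the case that matters, since otherwise the skewness is $+\infty$ by convention and the identity is vacuous.) Substituting into the displayed identity and solving for $\|\mbf{j}_k^\perp\|$ yields
\[
\|\mbf{j}_k^\perp\| = \frac{\prod_{k=1}^{m}\sigma_k}{\prod_{i=1}^{m-1}\sigma_{k,i}}.
\]

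Finally, I would plug this expression into Definition~\ref{def:skew}: the $k$th component of the skewness vector becomes
\[
[\mbf{SK_Q}(\lambda)]_k = \frac{\|\mbf{j}_k\|}{\|\mbf{j}_k^\perp\|} = \frac{\|\mbf{j}_k\|\,\prod_{i=1}^{m-1}\sigma_{k,i}}{\prod_{k=1}^{m}\sigma_k},
\]
and taking the $\infty$-norm, i.e.\ the maximum over $1\le k\le m$ as in~\eqref{Eq:Skew}, gives precisely the claimed formula for $SK_Q(\lambda)$.

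I do not anticipate a serious obstacle here — the corollary is essentially a bookkeeping consequence of the two cited structural results — but the one point requiring genuine care is the passage from "remove the first row" in Theorem~\ref{thm:fundamental_decomp} to "remove the $k$th row." I would make explicit that row permutations are realized by left-multiplication by a permutation matrix $P$, that $Pa(PJ)$ and $Pa(J)$ have equal $\mu_m$-measure (permutation matrices are orthogonal, hence volume-preserving on the span), and that the orthogonal decomposition $\mbf{j}_k = \mbf{j}_k^\perp + \mbf{j}_k^0$ relative to $\mathrm{span}\{\mbf{j}_i : i \neq k\}$ is intrinsic and does not depend on which slot $\mbf{j}_k$ occupies. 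With that remark in place the rest is a direct substitution, and I would also explicitly dispatch the degenerate cases (some $\sigma_k = 0$, or $\|\mbf{j}_k^\perp\| = 0$) by matching them to the $\infty$ branches in Definitions~\ref{def:local-scaling} and~\ref{def:skew}.
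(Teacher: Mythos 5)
Your proposal is correct and follows essentially the same route as the paper's own proof: apply Definition~\ref{def:skew} together with Theorem~\ref{thm:fundamental_decomp} to rewrite each $\|\mbf{j}_k^{\perp}\|$ as the ratio $\mu_m(Pa(J))/\mu_{m-1}(Pa(J_k))$, then invoke Lemma~\ref{lem:prod_singvals} to convert both volumes into products of singular values. The only difference is that you explicitly justify the passage from ``remove the first row'' to ``remove the $k$th row'' via row permutation and dispatch the degenerate rank cases, details the paper leaves implicit.
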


\begin{proof}
    See Appendix~\ref{app:scaling-skewness-proofs}.
\end{proof}


\subsection{Expecting Skewness Effect}

Since $SK_Q(\lambda)$ may vary substantially over $\pspace$, we must quantify this variability in order to optimally choose $Q\in\Qspace$.
There is no largest value since there exists maps $Q$ such that the condition of the Jacobian may be arbitrarily large.
If the Jacobian were to ever fail to be full rank, then $SK_Q(\lambda)$ is infinite since there exists rows of the Jacobian where $\|\mbf{j}_k^{\perp}\|=0$, which would occur if some of the components of the QoI map can, locally, be replaced by linear combinations of other components. 
Thus, as with the scaling effect, we use the harmonic mean in the following definition. 
\begin{definition}\label{def:avgskew}
For any $Q\in\Qspace$, we define the {\bf average}  (or {\bf expected}) skewness of $Q$ as
\begin{equation}\label{Eq:skewint}
	ESK(Q) := \Bigg(\frac{1}{\harmmeas(\pspace)}\int_{\pspace} \frac{1}{SK_Q(\lambda)} \, d\harmmeas\Bigg)^{-1},
\end{equation}
\end{definition}
where we again typically take $\harmmeas=\pmeas$ or $\harmmeas = \initmeas$.


\section{Defining an OED with ESE or ESK}\label{sec:oed}

OED uses a model to estimate the utility of an experiment both before the experiment is conducted and assuming that the model can be used to predict the data with some uncertainty.
Typically, a utility function is maximized to determine the OED, such as the expected information gain, as mentioned in the introduction.
We take this perspective in the formal definition given below of the OED for DCI as a maximization problem.
\begin{definition}[OED for DCI]\label{def:oed}
  Let $\Qspace$ the space of all possible experimental designs, $Q\in\Qspace$ be a specific design, and $U$ denote a utility function. Then, the OED is the $Q\in\Qspace$ that maximizes the utility of the experiments (i.e., potential QoI maps). Formally, we write this OED as
  \begin{equation}\label{eq:qopt}
    Q_{\text{opt}} := \arg \max_{Q\in\Qspace}U(Q).
  \end{equation}
\end{definition}
In this work, the ESE or ESK of a map should be minimized to determine optimality, so we define utility functions for these properties based on their reciprocals, i.e.,
\begin{equation}\label{eq:utility}
U(Q) = ESE^{-1}(Q), \quad \text{or} \quad U(Q) = ESK^{-1}(Q).
\end{equation}
In Section~\ref{subsec:simulOED}, we demonstrate that using the reciprocal of the ESE or the ESK leads to intuitive optimal designs, and in Section~\ref{subsec:OED_DCI} we show that these design choices have a significant impact on the solutions to nominal stochastic inverse problems.

The focus in this paper is on the utilization of certain geometric properties, namely scaling or skewness of Jacobians, for OED for the purpose of DCI.
For the sake of focusing on these utility functions, we do not explore different approaches for solving the optimization problem given by Definition~\ref{def:oed} and simply find the optimal design over a discrete set of candidate designs.
In other words, we discretize $\Qspace$ to define a finite design space in all problems and leave the topic of continuous optimization over $\Qspace$ to a future study.

\section{Skewness as incremental scaling and a greedy algorithm}\label{subsec:greedy}


In this section, we provide a theoretical interpretation of skewness as quantifying incremental changes in scaling.
Specifically, we show that computing skewness is equivalent to computing the relative change in scaling of the proposed $m$-dimensional map compared to all of its $(m-1)$-dimensional counterparts.
This makes skewness a natural choice for the design criterion of a greedy algorithm.


\begin{corollary}\label{Cor:Skewness_as_scaling_update}
	For any $Q\in\Qspace$, 
	\begin{equation}
		SK_Q(\lambda) = \frac{SE_Q(\lambda)}{\| \operatorname{diag}(\|\mbf{j}_k\|) \mbf{SE_{Q,m-1}} \|_\infty},
	\end{equation}
	where $\operatorname{diag}(\|\mbf{j}_k\|)\in\mathbb{R}^{m\times m}$ denotes the diagonal matrix whose $k$th diagonal component is the Euclidean norm of the $k$th row of the Jacobian $J_{Q}(\lambda)$, and $\mbf{SE_{Q,m-1}}\in\mathbb{R}^m$ denotes the vector whose $k$th component is the scaling effect of the $(m-1)$-dimensional QoI map defined by removing the $k$th component from $m$-dimensional map $Q$.
\end{corollary}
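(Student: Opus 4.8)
The plan is to reduce the claim to Corollary~\ref{Cor:singular_values}, which already expresses $SK_Q(\lambda)$ as the maximum over $1\le k\le m$ of $\|\mbf{j}_k\|\prod_{i=1}^{m-1}\sigma_{k,i}$ divided by $\prod_{k=1}^{m}\sigma_k$, and then to rewrite each of these products in terms of the scaling effects appearing in the statement. By Definition~\ref{def:local-scaling}, the product $\prod_{k=1}^{m}\sigma_k$ of the singular values of $J_Q(\lambda)$ equals $SE_Q(\lambda)^{-1}$; likewise, the product $\prod_{i=1}^{m-1}\sigma_{k,i}$ of the singular values of the deleted-row submatrix $J_{Q,k}(\lambda)$ is the reciprocal of the scaling effect of the $(m-1)$-dimensional map obtained by dropping the $k$th component of $Q$, that is, $[\mbf{SE_{Q,m-1}}]_k^{-1}$. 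Substituting both identities into Corollary~\ref{Cor:singular_values} gives $SK_Q(\lambda) = SE_Q(\lambda)\,\max_{1\le k\le m}\bigl(\|\mbf{j}_k\|/[\mbf{SE_{Q,m-1}}]_k\bigr)$.

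The remainder is algebraic rearrangement. Since $SE_Q(\lambda)$ is a nonnegative scalar independent of $k$, it may be pulled out of the maximum, and on the full-rank set a maximum of positive reciprocals equals the reciprocal of the corresponding minimum, so $SK_Q(\lambda) = SE_Q(\lambda)\,\big/\,\min_{1\le k\le m}\bigl([\mbf{SE_{Q,m-1}}]_k/\|\mbf{j}_k\|\bigr)$. It then remains to match this denominator with the extreme value over the components of the weighted vector $\operatorname{diag}(\|\mbf{j}_k\|)\,\mbf{SE_{Q,m-1}}$ appearing in the statement; carrying out that matching completes the proof. A separate, mild point is the degenerate case: when $J_Q(\lambda)$ is rank-deficient some $\sigma_k=0$, so $SE_Q(\lambda)=+\infty$, and simultaneously $\|\mbf{j}_k^\perp\|=0$ for some row, so $[\mbf{SK_Q}(\lambda)]_k=+\infty$ by Definition~\ref{def:skew}; one checks that both sides of the asserted formula evaluate to $+\infty$ under the conventions of Definitions~\ref{def:skew} and~\ref{def:local-scaling}, so the identity holds verbatim rather than only where the Jacobian has full rank.

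I expect the main obstacle to be precisely this bookkeeping of reciprocals and of where the row-norm weights $\|\mbf{j}_k\|$ belong. Because $SE$ is itself defined as the reciprocal of a product of singular values, at every substitution one must track which quantities sit in the numerator and which in the denominator, and one must confirm that weighting by $\operatorname{diag}(\|\mbf{j}_k\|)$ together with the passage from a componentwise maximum to an $\infty$-norm of the weighted vector is consistent with the definition of $SK_Q(\lambda)$ in~\eqref{Eq:Skew}. This is the one place where the precise reading of the notation in the statement must be checked, so it is the step I would verify most carefully.

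As a sanity check before committing to the route through Corollary~\ref{Cor:singular_values}, I would also re-derive the per-component statement directly from Theorem~\ref{thm:fundamental_decomp}, applied with each row $\mbf{j}_k$ (not only the first) playing the distinguished role, together with Lemma~\ref{lem:prod_singvals}. These give $\|\mbf{j}_k^\perp\| = \mu_m\bigl(Pa(J_Q(\lambda))\bigr)\big/\mu_{m-1}\bigl(Pa(J_{Q,k}(\lambda))\bigr) = SE_{Q_k}(\lambda)/SE_Q(\lambda)$, and substituting this into the definition $[\mbf{SK_Q}(\lambda)]_k = \|\mbf{j}_k\|/\|\mbf{j}_k^\perp\|$ already yields the per-component form of the claim, leaving only the $\infty$-norm to be taken. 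This identity for $\|\mbf{j}_k^\perp\|$ in terms of scaling effects is really the heart of the matter.
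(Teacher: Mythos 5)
Your route is exactly the paper's: its proof of this corollary is a one-line citation of the local scaling definition and Corollary~\ref{Cor:singular_values}, and your substitutions $\prod_{k=1}^m\sigma_k = SE_Q(\lambda)^{-1}$ and $\prod_{i=1}^{m-1}\sigma_{k,i} = [\mbf{SE_{Q,m-1}}]_k^{-1}$ are correct, as is your cross-check via Theorem~\ref{thm:fundamental_decomp} that $\|\mbf{j}_k^\perp\| = [\mbf{SE_{Q,m-1}}]_k / SE_Q(\lambda)$. What you have actually established is
$SK_Q(\lambda) = SE_Q(\lambda)\max_{1\le k\le m}\bigl(\|\mbf{j}_k\|/[\mbf{SE_{Q,m-1}}]_k\bigr) = SE_Q(\lambda)\big/\min_{1\le k\le m}\bigl([\mbf{SE_{Q,m-1}}]_k/\|\mbf{j}_k\|\bigr)$.

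The gap is precisely the step you deferred: the ``matching'' cannot be carried out, because $\min_k\bigl([\mbf{SE_{Q,m-1}}]_k/\|\mbf{j}_k\|\bigr)$ is not $\max_k\bigl(\|\mbf{j}_k\|\,[\mbf{SE_{Q,m-1}}]_k\bigr) = \|\operatorname{diag}(\|\mbf{j}_k\|)\,\mbf{SE_{Q,m-1}}\|_\infty$; in the displayed formula the row norms enter with the wrong sign of exponent and the extremum has the wrong sense. A two-by-two check makes this concrete: for $J_Q(\lambda) = \operatorname{diag}(2,1)$ the rows are orthogonal, so $SK_Q(\lambda)=1$ by Definition~\ref{def:skew}, while $SE_Q(\lambda)=1/2$, $\mbf{SE_{Q,m-1}}=(1,\,1/2)$, $\|\operatorname{diag}(\|\mbf{j}_k\|)\,\mbf{SE_{Q,m-1}}\|_\infty = \max(2,\,1/2)=2$, and the right-hand side of the corollary evaluates to $1/4$. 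So the identity you derived is the correct one --- equivalently $SK_Q(\lambda) = SE_Q(\lambda)\,\|\operatorname{diag}(\|\mbf{j}_k\|)\,\mbf{w}\|_\infty$ with $w_k = [\mbf{SE_{Q,m-1}}]_k^{-1} = \prod_{i=1}^{m-1}\sigma_{k,i}$ --- and the statement as printed is false rather than a rearrangement away from it. Your instinct to flag the weighting and the $\infty$-norm as the step to verify most carefully was the right one, but a complete proof must carry that verification out; here it would have exposed the discrepancy instead of leaving the argument resting on a reconciliation that does not exist. (Your remark about the rank-deficient case is fine and is consistent with the $+\infty$ conventions in Definitions~\ref{def:local-scaling} and~\ref{def:skew}.)
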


\begin{proof}
    This follows immediately from Definition~\ref{def:scaling} and Corollary~\ref{Cor:singular_values}.
\end{proof}

In many cases, it is infeasible to exhaustively search the design space $\Qspace$ according to a design criterion to determine the OED.  
For example, suppose we have nine sensors that can be placed in any of 100 possible locations.
In this case, the design space is defined by ${{100}\choose{9}} = 1.9$E12 possible maps. 
Assuming the design criteria defining the utility function in the OED problem takes only $1$E-6 seconds to evaluate, the process of determining the OED will take an estimated $527.7$ days to compute.  
Therefore, a greedy algorithm for constructing vector-valued maps component-by-component has proven to be an attractive alternative in such scenarios, e.g., see~\cite{Huan_Jagalur_Marzouk_2024,Drovandi02012014,Dror01032008}.
In this work, we propose the use of skewness as the design criteria in a greedy algorithm for two reasons.
First, at a conceptual level, skewness quantifies the amount of new information a proposed additional measurement has compared to existing measurements making it particularly well-suited for a greedy algorithm.
Second, it has a mathematical connection to how scaling is transformed when moving from an $(m-1)$-dimensional QoI map to an $m$-dimensional map as summarized by Corollary~\ref{Cor:Skewness_as_scaling_update}.




We first compute the $ESE^{-1}(Q)$ for a design space, denoted by $\Qspace^{(1)}$, defined by all potential {\em scalar}-valued maps. 
The optimal design based on the scaling criteria is used to define the first component of the QoI map.
We then compute the $ESK^{-1}(Q)$ for a new design space, denoted by $\Qspace^{(2)}$, of potential two-dimensional QoI maps whose first component is fixed from the first step.
Again, the (greedy) OED is determined by the $Q\in\Qspace^{(2)}$ with the greatest $ESK^{-1}(Q)$.  
The algorithm continues in this way until either (1) the dimension of the vector-valued map matches the dimension of the desired data space $\mathcal{D}$ specified at the beginning of the algorithm, or (2) $ESK^{-1}(Q)$ is less than some prescribed tolerance for all $Q$ in the current design space being optimized.
The latter case suggests that there are no further QoI components to add among possible candidates that bring sufficiently new (i.e., non-redundant) information compared to the existing lower-dimensional QoI map.
This is summarized in Algorithm~\ref{alg:greedy} where we let $\Qspace^{(d)}$ denote a design space defined by vector-valued observable maps of dimension $d$ where the first $d-1$ components are chosen by the previous steps in the algorithm.

\begin{algorithm}\caption{A Greedy Algorithm for OED}\label{alg:greedy}
\begin{algorithmic}
    \STATE {Specify desired $m=\dim(\dspace)$ and $tol>0$.}
    \STATE {Let $\Qspace^{(1)}$ denote all possible scalar maps and $Q_\text{opt} = \arg \max_{Q\in\Qspace^{(1)}} ESE^{-1}(Q)$.}
    \STATE {Let $\Qspace^{(2)} = \set{ (Q_\text{opt}, Q) \, : \, Q\in\Qspace^{(1)} }$.}
\FOR{$d = 2, \dots, m$}
    \STATE {Let $Q_\text{opt} = \arg \max_{Q\in\Qspace^{(d)}} ESK^{-1}(Q)$}
    \IF {$d<m$ and $\exists$ $Q\in\Qspace^{(d)}$ such that $ESK^{-1}(Q)\geq tol$}
	\STATE {Let $\Qspace^{(d+1)} = \set{ (Q_\text{opt}, Q) \, : \, Q\in\Qspace^{(1)} }$.}
    \ELSE
        \STATE {Exit for-loop.}
    \ENDIF
\ENDFOR
\end{algorithmic}
\end{algorithm}

\section{Numerical Results}
\label{S:Example}

In this section we define a simulation model and use it to test and compare the various design criteria considered in this work.
The model represents a common scenario in computational science and engineering where we seek to characterize an intrinsic variability over a population of manufactured components.
This problem formulation is well-suited for the DCI framework, and in our first numerical example, we compare the solutions to representative stochastic inverse problems for (locally) optimal designs and suboptimal designs. 



\subsection{Model setup}\label{S:model-setup}

Consider the problem of heating a population of thin metal rod with uncertain thermal conductivity.
For the mathematical model, we use the time-dependent diffusion equation to model the temperature $u$ as a function of space-time coordinate $(x,t)$:
\begin{equation}\label{eq:cbayes_heatrod}
	\begin{cases}
    	\rho c\frac{\partial u}{\partial t} = \nabla \cdot (\kappa \nabla u) + S, & x \in \Omega,  t \in (t_0, t_f], \\
    	\frac{\partial u}{\partial n} = 0 , & x \in \partial \Omega,  t \in (t_0, t_f], \\
    	 u(x) = 0, & x \in \Omega, t=t_0,
    \end{cases}
\end{equation}
where $\Omega = (0, 1)$ represents a metal alloy rod of unit length, $\rho=1.5$ is the density of the rod, $c=1.5$ is the heat capacity, $\kappa$ is the (uncertain) thermal conductivity, $t_0=0$, $t_f=1$, and $S$ is a Gaussian source defined by
\begin{equation*}
	S(x) = 50\exp\Big(-\frac{(0.5-x)^2}{0.05}\Big). 
\end{equation*}

Suppose the rod is manufactured by welding together two rods of equal length and of similar alloy type.
For the sake of simplicity, we do not attempt to characterize the material properties of the weld itself.
However, due to the manufacturing process, the actual alloy compositions may differ across the population, which implies that the thermal conductivities satisfy
\begin{equation*}
    \kappa = \begin{cases}
        \lambda_1, & x < 0.5 \\
        \lambda_2, & x \geq 0.5
    \end{cases},
\end{equation*}
where $\lambda_1$ and $\lambda_2$ are uncertain and may vary between 0.01 and 0.2. 
We emphasize that for an individual rod, the uncertainty in the material properties and the associated specific values of thermal conductivities is considered epistemic, which could be characterized and quantified using other inference technique.
However, when the population is considered, the uncertainty in the material properties and associated thermal conductivities is aleatoric since there is intrinsic and irreducible variability across the population.
Thus, we set $\pspace=[0.01, 0.2]^2$.
Given a sample of thermal conductivity values from $\pspace$, we approximated solutions to the PDE in Eq.~\eqref{eq:cbayes_heatrod} using the open source finite element software 
MrHyDE~\cite{mrhyde_user,mrhyde2023github}
with 40 uniform piecewise linear finite elements and a second-order implicit midpoint time-integration rule with $t_0=0<t\leq 1.0=t_f$ and 20 uniform time steps.
The experiments simulated placing two contact thermometers that record two separate temperature measurements at time $t=t_f$.  
The points $(x_0, x_1)\in\Omega\times\Omega$ index the $Q\in\Qspace$.  

For computational simplicity, we used the finite element spatial discretization of the physical domain $\Omega$ into $41$ equally spaced points to discretize $\Qspace$. 
Specifically, we let 
\[\{(x_0^{(k)}, x_1^{(k)})\}^{820}_{k=1}\subset \Omega\times\Omega,\]
denote a point from $\Omega\times\Omega$ restricted to the unique subset of the possible tensor products of points forming the finite element mesh, 
which defines an indexing of potential observable maps as
\[\{Q_{k}\}_{k=1}^{820}, \text{ where } Q_{k}:=\big(u(x_0^{(k)}, t_f), u(x_1^{(k)}, t_f)\big).\]  
We note that there is symmetry within $\Omega\times\Omega$ along the line $x_0 = x_1$,
so we only need to consider plots and optimization of the unique criteria since the others correspond to a re-ordering of the contact thermometers.

\subsection{Solution space characteristics}
In the left plot of Fig.~\ref{fig:test_problem_solutions}, we show some representative solutions $u(x,t_f)$ for various parameter values taken from $\pspace$ to help build intuition for solutions of the OED problems under various design criteria.
When thermal conductivity is low ($\kappa=0.01$) on both sides of the rod, the temperature increases more near the center of the rod where the source is located.
If the left and right sides of the rod have thermal conductivities of 0.01 and 0.2, respectively, then the temperature profile is continuous, but non-differentiable, at the center of the rod.
Moreover, since heat is conducted more easily on the right side of the rod, the temperature near the insulated right endpoint of the rod increases faster than on the left side of the rod where the temperature will increase significantly more near the source than near the endpoint.
Finally, when the thermal conductivity is high ($\kappa=0.2$) on both sides of the rod, the temperature profile is qualitatively similar to when the thermal conductivity is low on both sides of the rod, but, a quantitative comparison reveals that the temperature at the center is reduced by about 60\%  while the temperatures at the insulated endpoints are about 600\% higher compared to the case where thermal conductivities are low on both sides.

\begin{figure}[htbp]
\centering
{\includegraphics[width=0.45\textwidth]{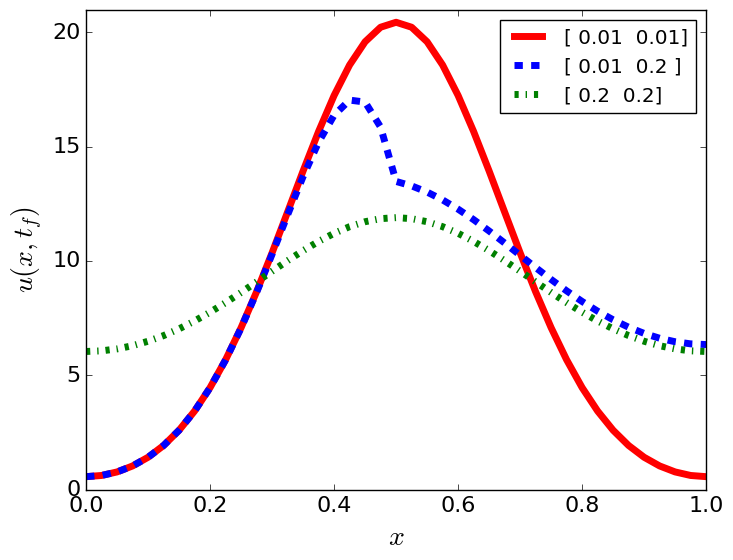}}
{\includegraphics[width=0.45\textwidth]{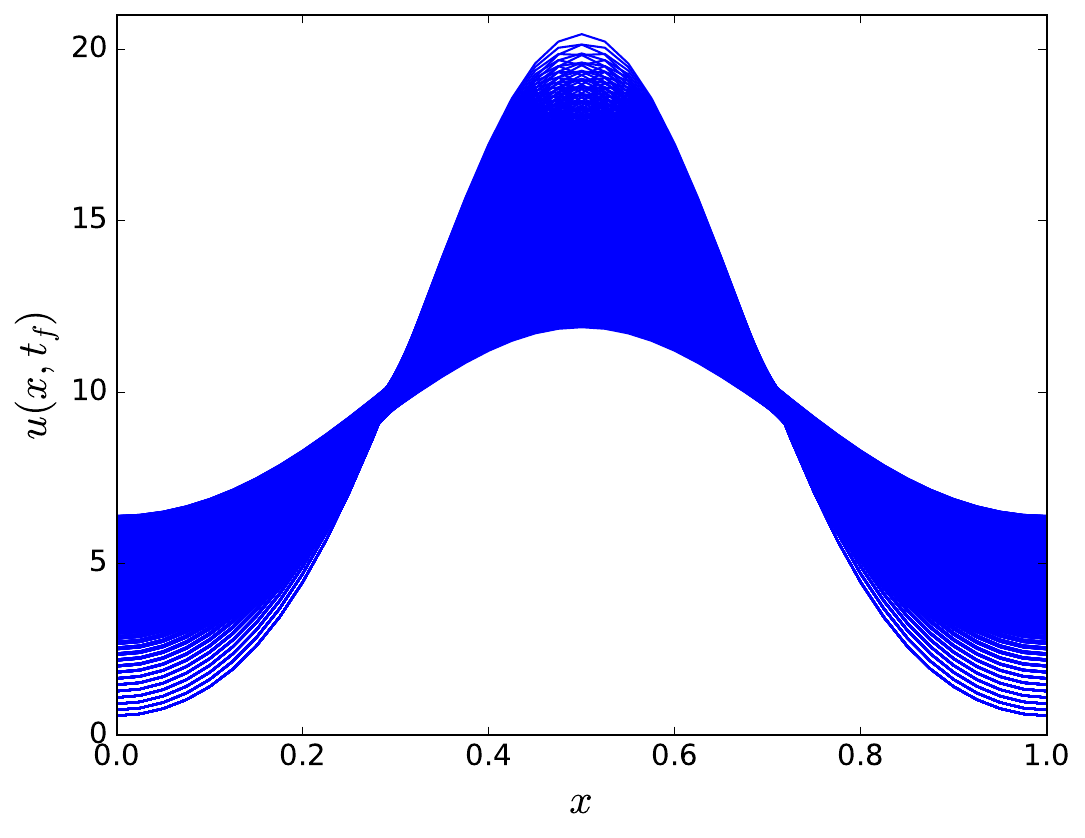}}
\caption{Left: Temperature profiles $u(x,t_f)$ for thermal conductivity of the left- and right-sides of the rod are respectively 0.01 and 0.01 (solid red curve), 0.01 and 0.2 (dashed blue curve), and 0.2 and 0.2 (dashed-dotted green curve).
Right: Temperature profiles $u(x,t_f)$ for a uniformly regular set of $50\times 50$ samples of thermal conductivities.}
\label{fig:test_problem_solutions}
\end{figure}

The right plot of Fig.~\ref{fig:test_problem_solutions} illustrates the variability in the temperature profiles over a set of $50\times 50$ uniformly spaced samples of thermal conductivity in $\pspace$.
There are two clear demarcation points in the plot where most of the temperatures agree regardless of the thermal conductivity.
Specifically, we observe that around $x=0.3$ and $x=0.7$, the variation in temperatures as a function of thermal conductivity is relatively small compared to other spatial locations.
This small variation influences all of the OED results shown below since measuring the solution at these points does not define QoI data that is useful for identifying the material properties.


\subsection{Singular Values and OED}\label{subsec:simulOED}

In this section, we optimize the design criteria defined by $ESE^{-1}(Q)$ or $ESK^{-1}(Q)$ for the model introduced in Section~\ref{S:model-setup}.
Specifically, we generated 10000 samples 
\[
\{\lambda^{(i)}\}_{i=1}^{10000}\subset\pspace,
\]
and computed the Jacobian matrices 
\[
\big\{\{J_{Q_k}(\lambda^{(i)})\}_{i=1}^{10000}\big\}_{k=1}^{820}
\] 
using a forward finite difference method with a perturbation of 1.0e-5.
Since the number of outputs greatly exceeds the number of inputs, an adjoint-based approach to compute the derivatives is not practical.  
The values of 
\[\{ESE^{-1}({Q_k})\}_{k=1}^{820} \quad \text{and} \quad \{ESK^{-1}({Q_k})\}_{k=1}^{820}\] 
were approximated by Monte Carlo estimates using the 10000 random samples of the thermal conductivities.

Since $\Qspace$ is indexed by the points in $\Omega\times\Omega$, we show a plot of $ESE^{-1}$ over $\Omega\times\Omega$ in Fig.~\ref{fig:mt_heatrod_scaling}.
We note that there are multiple local maxima, which are denoted by black dots.
In the table of Fig.~\ref{fig:mt_heatrod_scaling}, we provide the utility function values for each of these local maxima.
The local maxima at $(1.0,0.0)$ is associated with placing a thermometer at each end of the rod, which the authors believe corresponds to an intuitive placement of thermometers.
However, we observe that the other local maxima are associated with QoI maps that have comparable utility values (within $2\%$ of the one indexed by $(1.0,0.0)$).

In the plot of Fig.~\ref{fig:mt_heatrod_skewness}, we show $ESK^{-1}$ as a function over $\Omega \times\Omega$.
As with the $ESE^{-1}$ utility function, we observe that this also possesses several local maxima.
One of these local maxima again corresponds to placing a thermometer at each end of the rod.
In the table of Fig.~\ref{fig:mt_heatrod_skewness}, we provide the values of $ESK^{-1}$ for each of the local maxima.
In this case, the OED is associated with $(1.0,0.0)$.

\begin{figure}[htbp]
\centering
\begin{minipage}[t]{.5\textwidth}
\centering
\vspace{0pt}
\includegraphics[width=\textwidth]{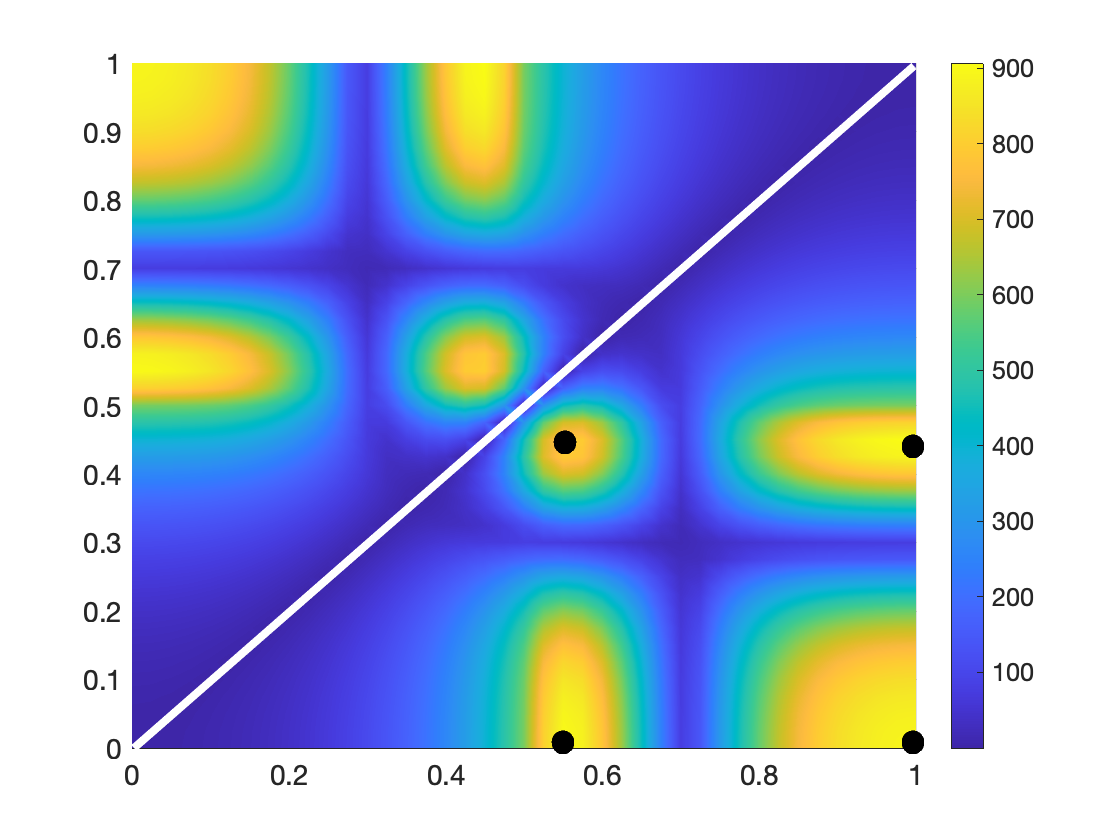}
\end{minipage}
\hskip 5pt
\begin{minipage}[t]{.4\textwidth}
\centering
\vspace{40pt}
\begin{tabular}{ccc}
\hline
Design Point & $ESE^{-1}$ \\ \hline \hline
$$(1.0, 0.45)$$ & $906.9$ \\
$$(0.55, 0.0)$$ & $900.9$ \\
$(1.0, 0.0)$ & $890.5$ \\
$(0.55, 0.45)$ & $801.5$ \\ \hline
\end{tabular}
\end{minipage}
\caption{(Left) Plot of $ESE^{-1}$ over the indexing set $\Omega\times\Omega$ for the design space $\Qspace$.  Note the symmetry across the diagonal.  Focusing on the lower triangular region, the four local maxima are marked with block circles.  (Right) The $ESE^{-1}$ values for the QoI maps associated with each of the four local maxima.}\label{fig:mt_heatrod_scaling}
\end{figure}

\begin{figure}[htbp]
\centering
\begin{minipage}[t]{.5\textwidth}
\centering
\vspace{0pt}
\includegraphics[width=\textwidth]{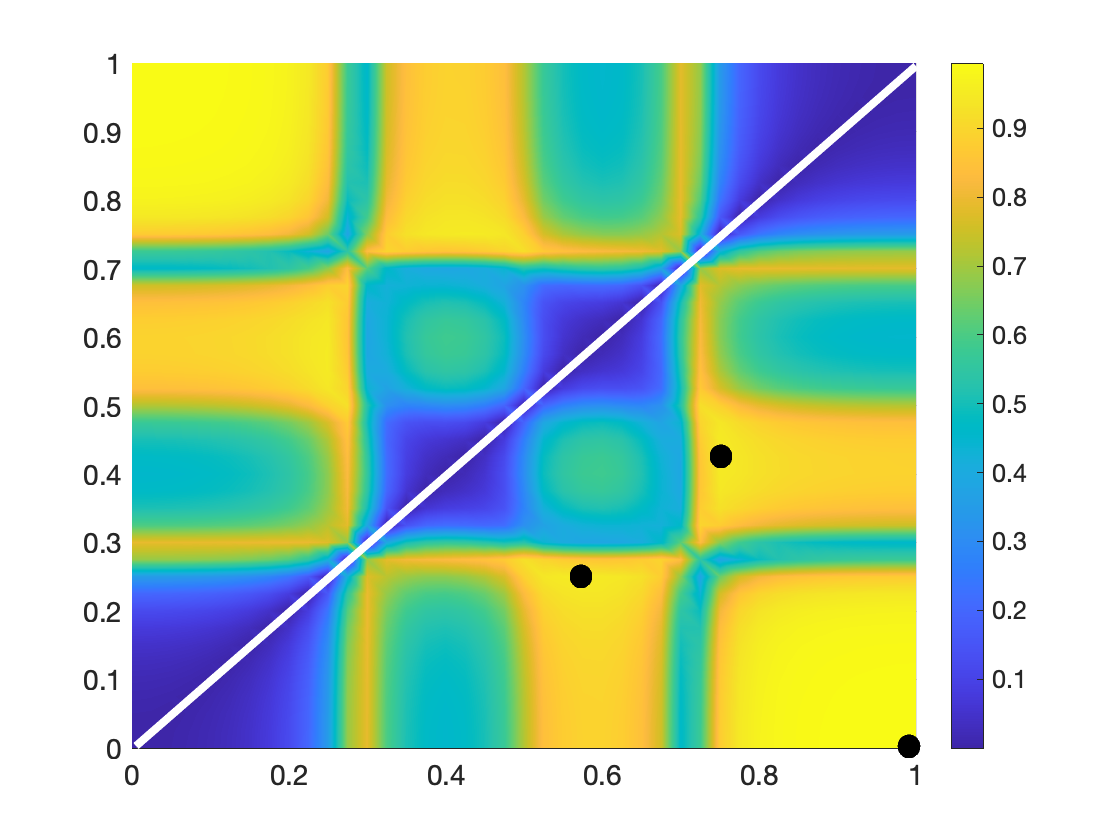}
\end{minipage}
\hskip 5pt
\begin{minipage}[t]{.4\textwidth}
\centering
\vspace{40pt}
\begin{tabular}{ccc}
\hline
Design Point & ${ESK}^{-1}$ \\ \hline \hline
$(1.0, 0.0)$ & 0.995 \\
$(0.575, 0.25)$ & 0.950 \\
$(0.75, 0.425)$ & 0.950 \\ \hline
\end{tabular}
\end{minipage}
\caption{(Left) Plot of $ESK^{-1}$ over the indexing set $\Omega\times\Omega$ for the design space $\Qspace$.  Note the symmetry across the diagonal.  Focusing on the lower triangular region, the most prominent three local maxima are marked with black circles.  (Right) The $ESK^{-1}$ for the QoI associated with each of these three prominent local maxima.}\label{fig:mt_heatrod_skewness}
\end{figure}


\subsection{Impact on Solutions to Stochastic Inverse Problems}\label{subsec:OED_DCI}

Recall that the evaluation of the OED criteria based on $ESE^{-1}$ and $ESK^{-1}$ does not require solving inverse problems.  
However, this work is also motivated by the construction of pullback probability measures within the DCI framework to determine which experiments are likely to be useful for solving the stochastic inverse problem.
To that end, we construct DCI solutions to illustrate the impact of the QoI chosen from this criteria.

For demonstration purposes, we evaluated the model at the midpoint of the input parameter space, and constructed an observed density by centering a Gaussian distribution around this point in the QoI space, using a fixed variance of
$0.15 \mathbb{I}_{2}$ where $\mathbb{I}_{2}$ denotes the $2\times 2$ identity matrix.
As shown in Figs.~\ref{fig:dci_oed1}-\ref{fig:dci_suboed2} below, the geometry of the output spaces associated with the optimal designs are significantly different both quantitatively and qualitatively from the geometry of the output spaces associated with a suboptimal design.
To give a representative sampling of the characteristics of the geometry of the prediction spaces and the associated updated densities, we consider two designs that correspond to local maxima of $ESE^{-1}$ or $ESK^{-1}$, and two designs that correspond to sub-optimal values for these criteria.

Fig.~\ref{fig:dci_oed1} illustrates the  optimal design at (1.0, 0.0) by the black dot in the plots of the $ESE^{-1}$ (upper left) and the $ESK^{-1}$ (upper right).
The predicted QoI samples (lower left) and the corresponding updated density (lower right) associated with this particular combination of sensor locations and observed density are also shown in this figure.
\begin{figure}[htbp]
\centering
\includegraphics[width=0.45\textwidth]{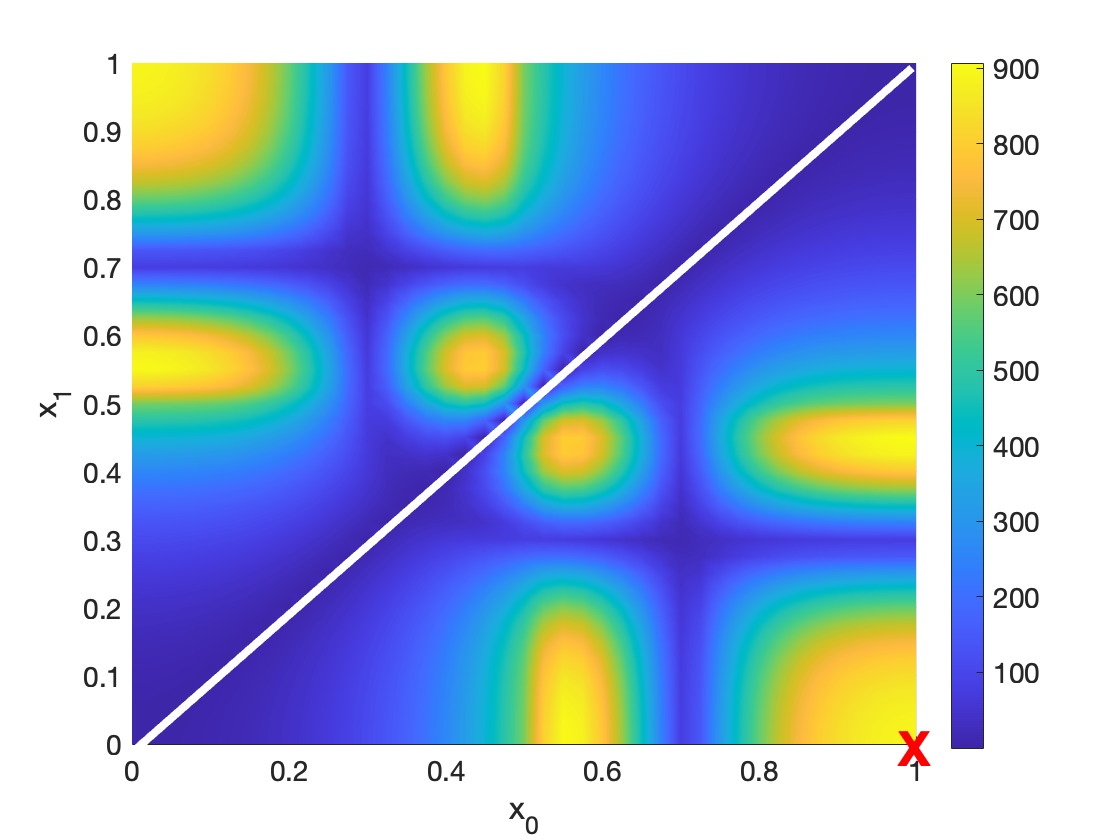}
\includegraphics[width=0.45\textwidth]{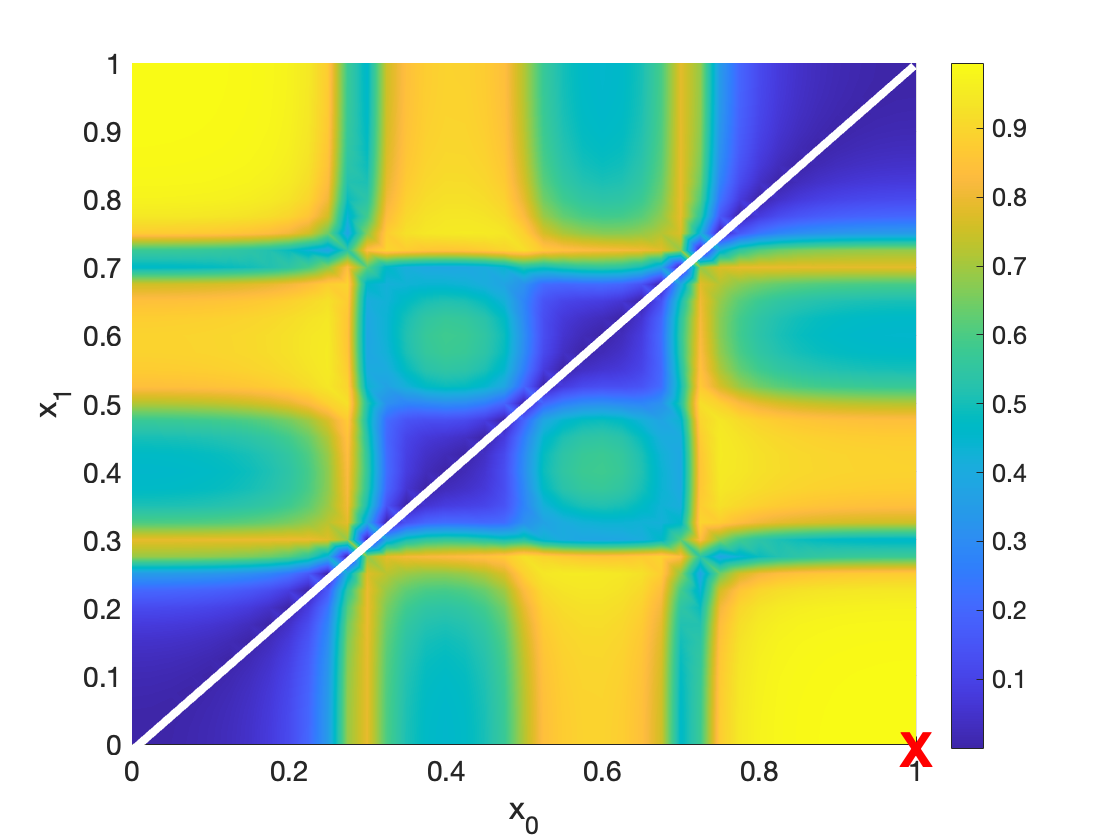}
\includegraphics[width=0.45\textwidth]{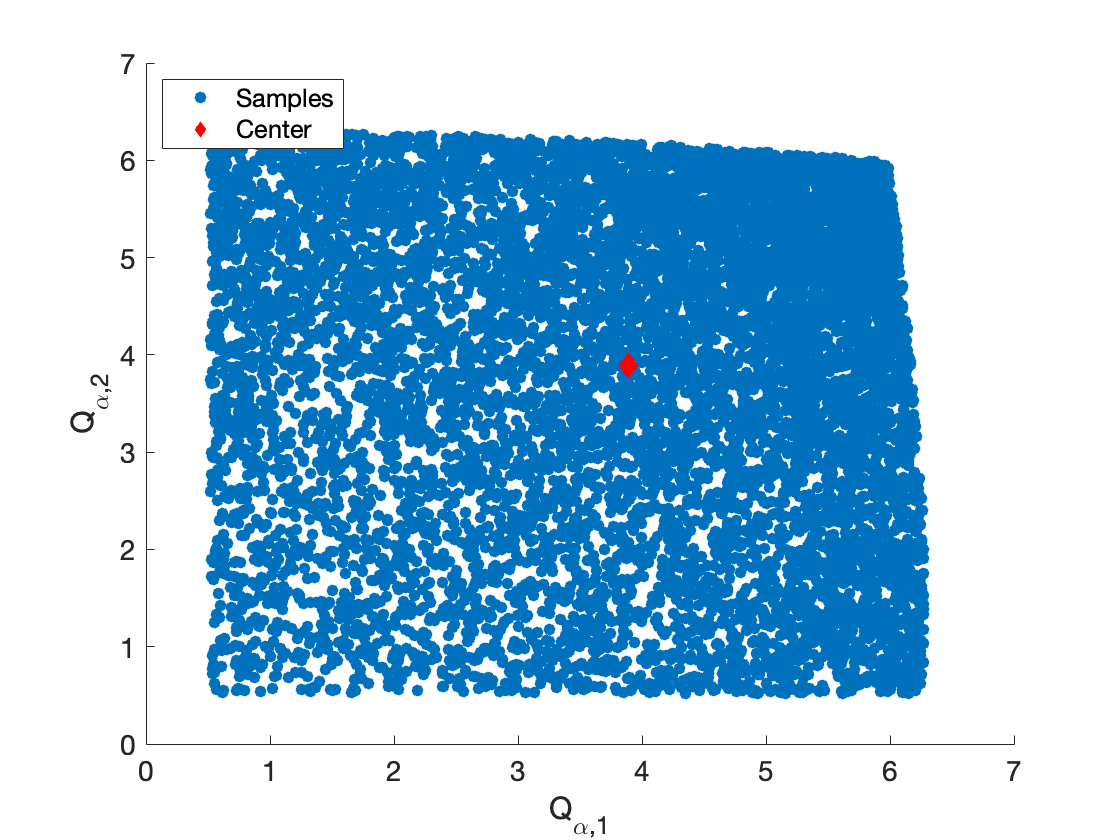}
\includegraphics[width=0.45\textwidth]{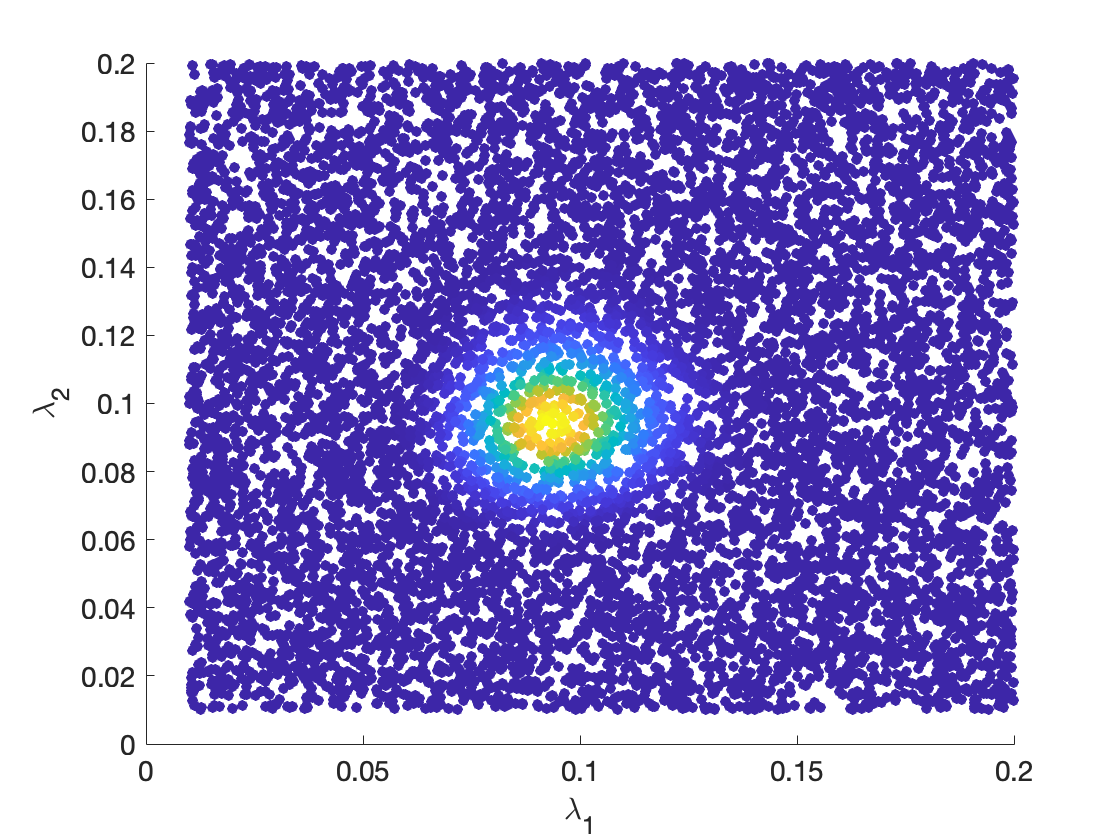}
\caption{The design corresponding to sensors at $x_0=1.0$ and $x_1=0.0$ is illustrated by the red ``X'' in the plots of $ESE^{-1}$ (upper left) and $ESK^{-1}$ (upper right). 
The predicted QoI samples (lower left) and the corresponding updated density (lower right) show the impact of this design. The red diamond in the lower left plot corresponds to the QoI evaluated at the midpoint of the parameter space and is used to define the mean of the Gaussian observed density with a covariance of $0.15 \mathbb{I}_{2}$. }
\label{fig:dci_oed1}
\end{figure}
The impact of this locally optimal design is seen in both the predicted QoI data as well as in the updated density.
Specifically, the prediction space for QoI data is approximately rectangular, which indicates that the two chosen measurements are providing information that are nearly geometrically orthogonal from each other across all of the parameter space.
Subsequently, the updated density is concentrated in a relatively small area around the midpoint that was propagated to construct the observed density.

Next, we consider the impact of the design at (1.0, 0.45), which is a locally optimal design for $ESE^{-1}$, as seen in the upper left plot in Fig.~\ref{fig:dci_oed2}.
However, this design has an inverse $ESK$ value that is approximately 20\% from the previous design considered at $(1.0, 0)$ (upper right plot in Fig.~\ref{fig:dci_oed2}).
The impact of this increase in skewness is immediately seen in the predicted QoI samples (lower left) and the corresponding updated density (lower right).
\begin{figure}[htbp]
\centering
\includegraphics[width=0.45\textwidth]{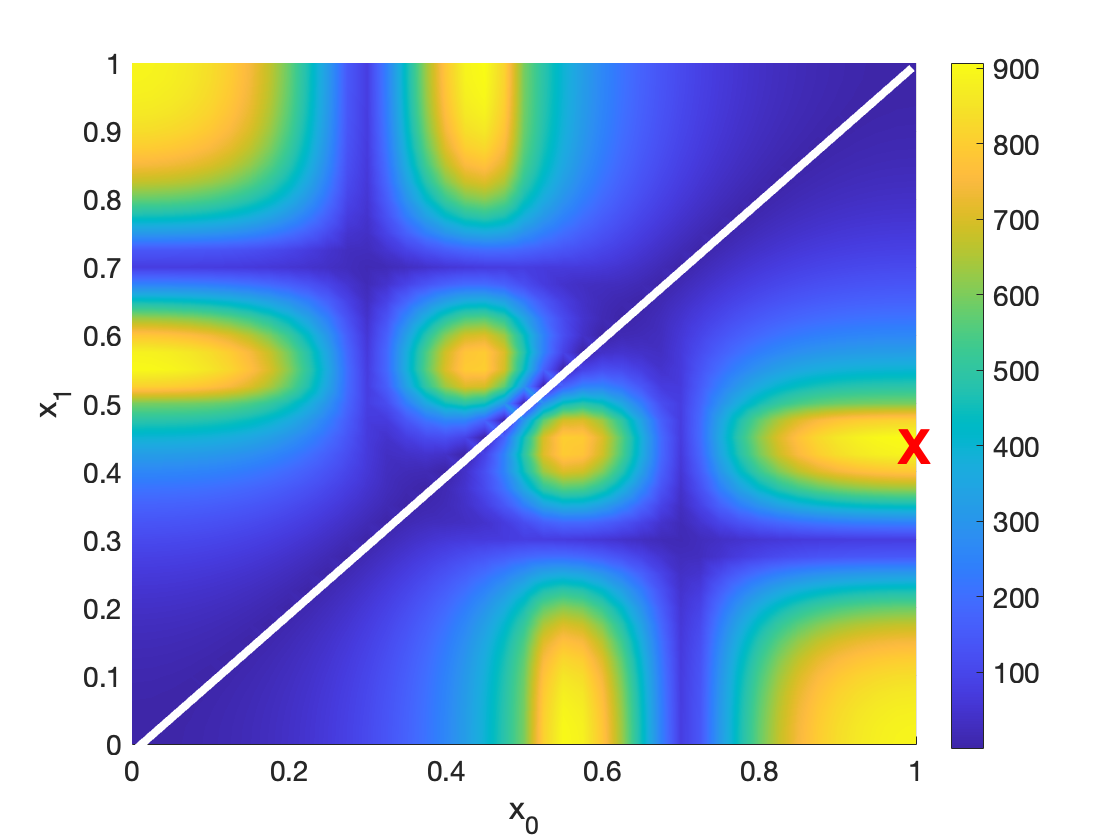}
\includegraphics[width=0.45\textwidth]{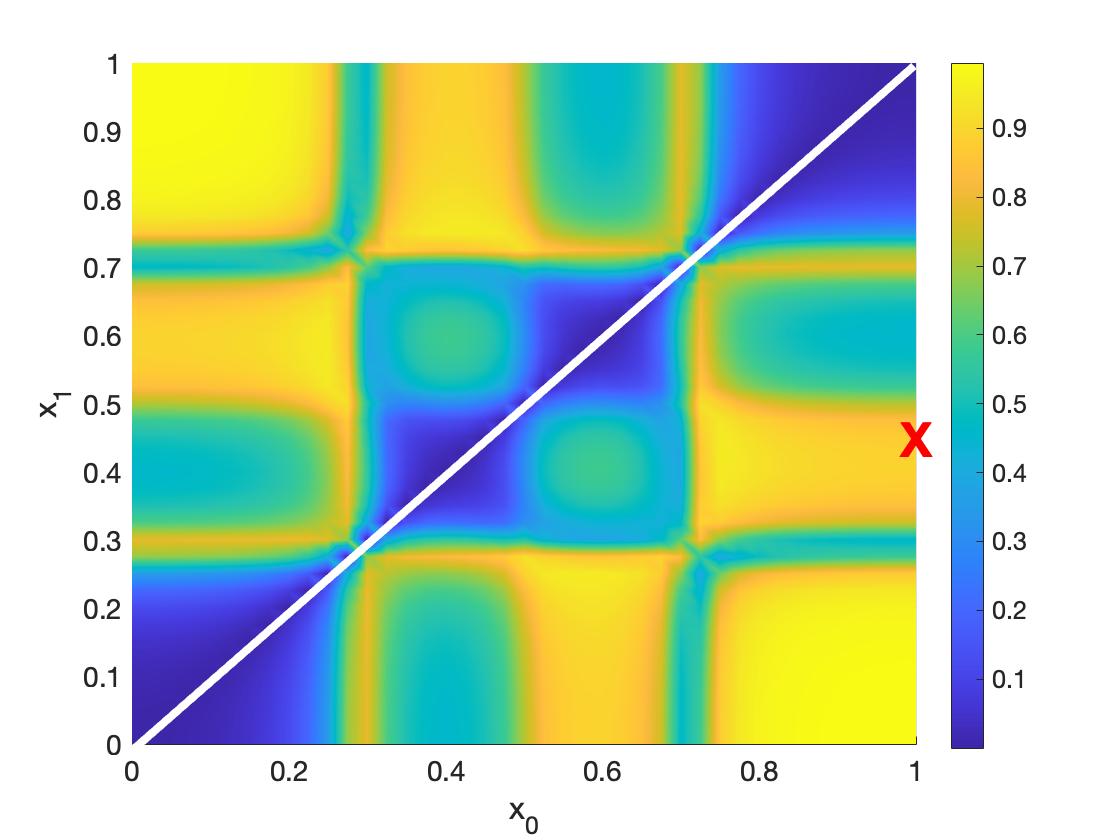}
\includegraphics[width=0.45\textwidth]{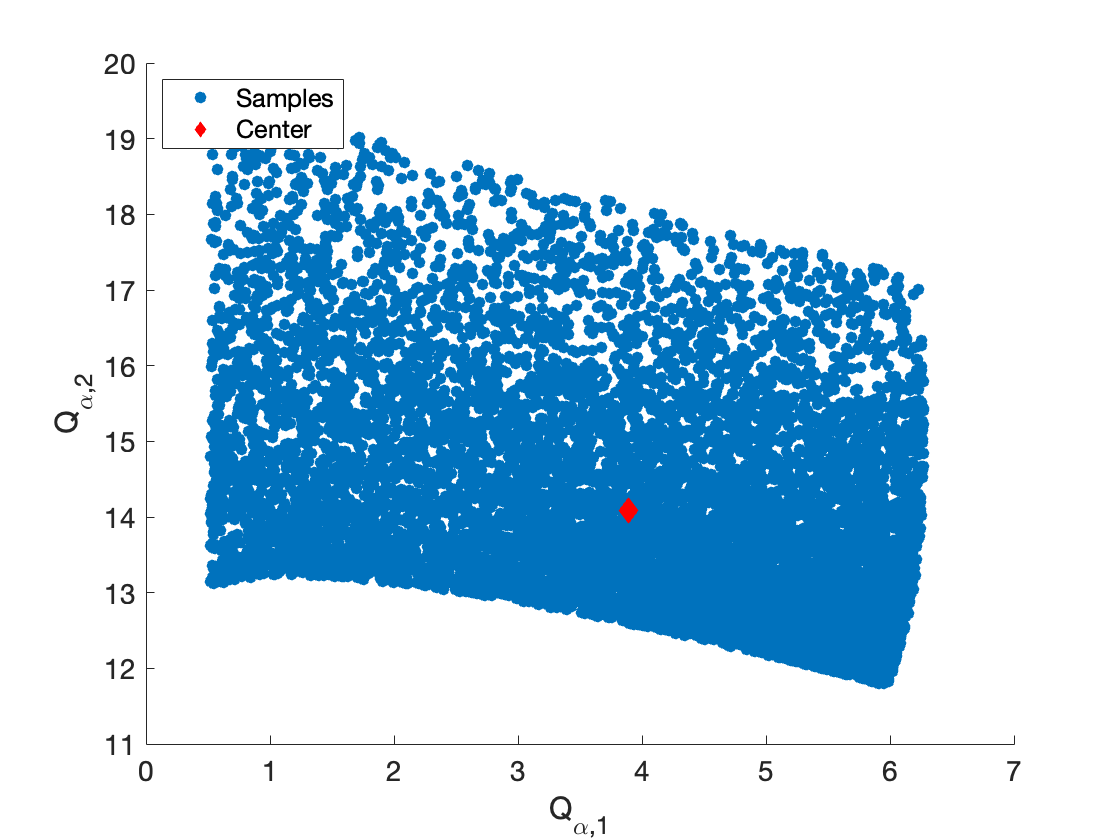}
\includegraphics[width=0.45\textwidth]{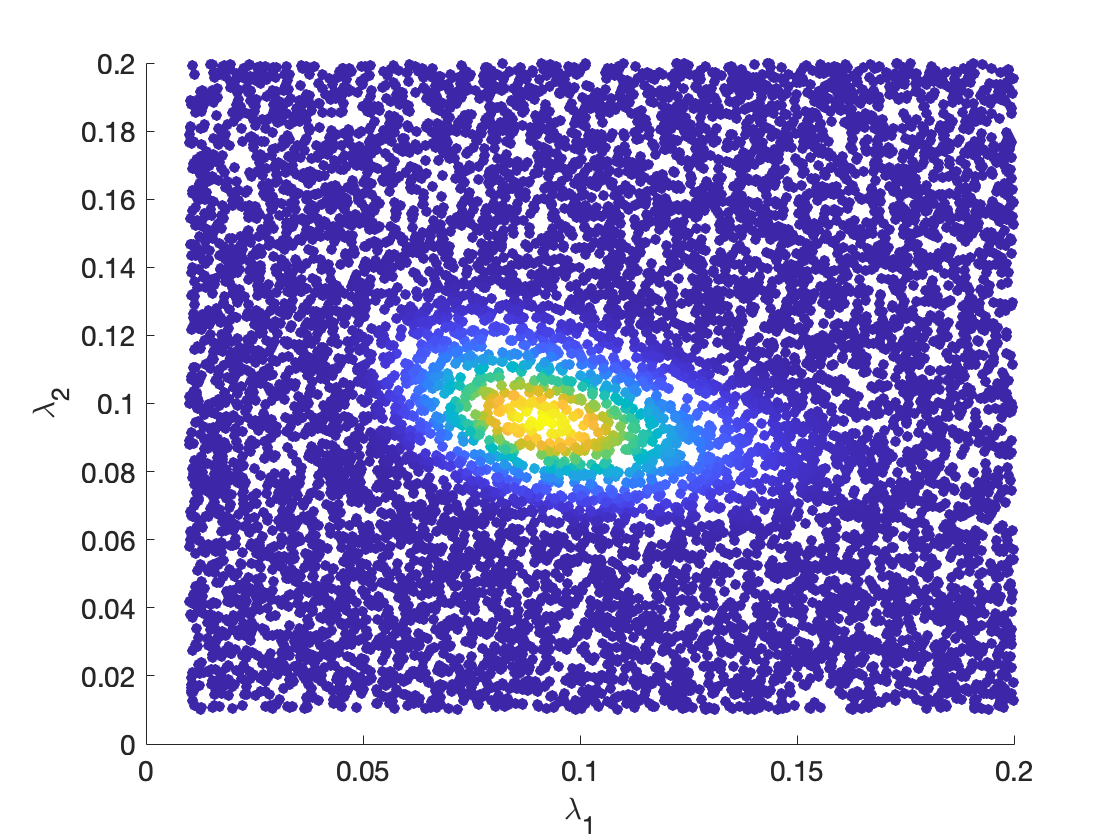}
\caption{The design corresponding to sensors at $x_0=1.0$ and $x_1=0.45$ is illustrated by the red ``X'' in the plots of the $ESE^{-1}$ (upper left) and $ESK^{-1}$ (upper right).  
The predicted QoI samples (lower left) and the corresponding updated density (lower right) show the impact of this design. The red dot in the lower left plot corresponds to the QoI evaluated at the midpoint of the parameter space and is used to define the mean of the Gaussian observed density with a covariance of $0.15 \mathbb{I}_{2}$.}
\label{fig:dci_oed2}
\end{figure}
More precisely, while the prediction space is still approximately rectangular, there is a clear rotation to this space along with a more significant clustering of samples visible in the lower right region compared to the clustering of samples seen in the upper right of the prediction space for the prior design.  
This subsequently produces an updated density that is still relatively concentrated around the midpoint used to construct the observed density but with a more visibly elliptical and skewed covariance structure in comparison to the updated density previously seen in Fig.~\ref{fig:dci_oed1}.

The impact of the next design we consider is at $(0.175, 0.0)$ shown as the black dot in the plots of $ESE^{-1}$ (upper left) and $ESK^{-1}$ (upper right) of Fig.~\ref{fig:dci_suboed1}.
For this design, both the $ESE^{-1}$ and $ESK^{-1}$ are relatively small (i.e., this is a clearly sub-optimal design).
The impact is immediately visible on both the predicted QoI samples (lower left) and the corresponding updated density (lower right).
\begin{figure}[htbp]
\centering
\includegraphics[width=0.45\textwidth]{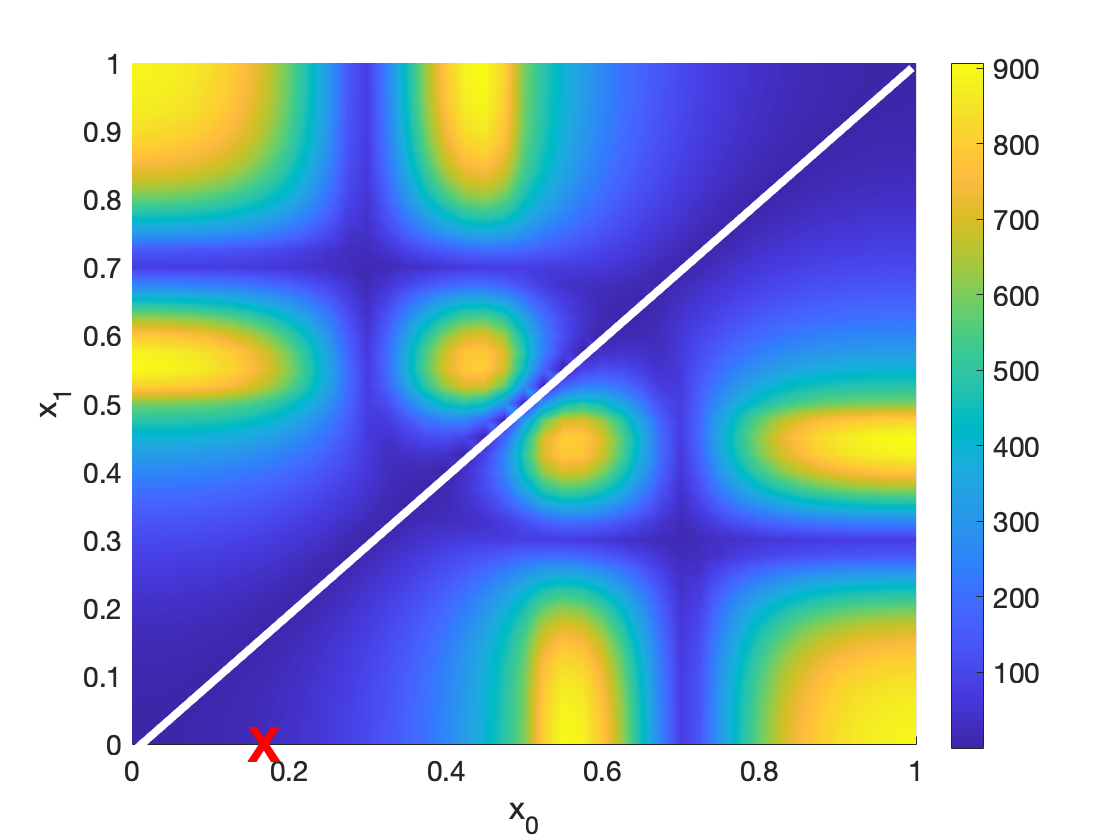}
\includegraphics[width=0.45\textwidth]{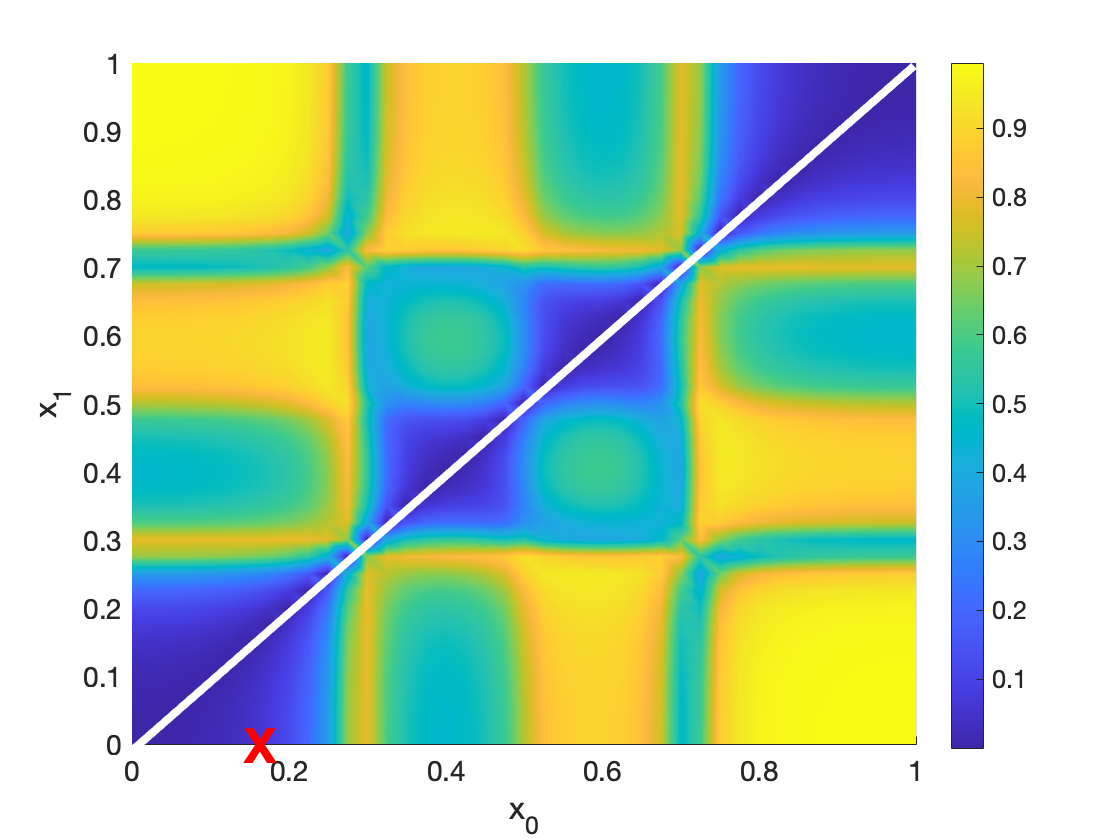}
\includegraphics[width=0.45\textwidth]{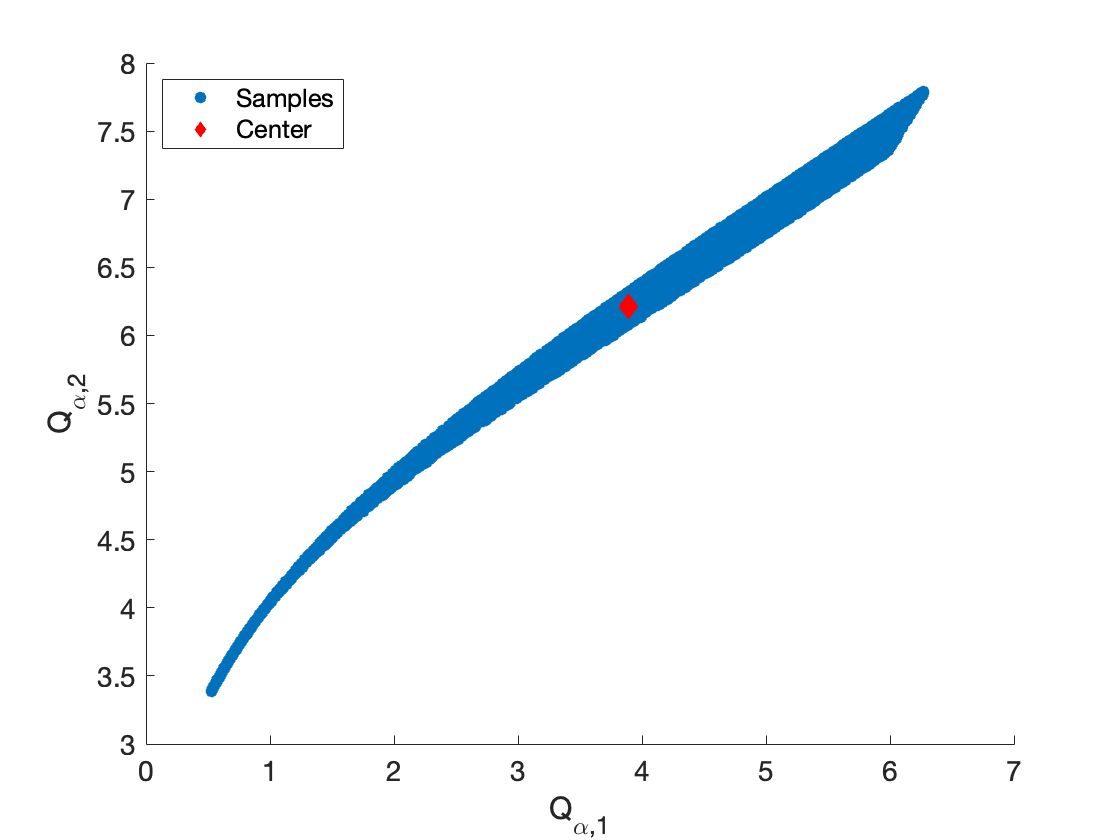}
\includegraphics[width=0.45\textwidth]{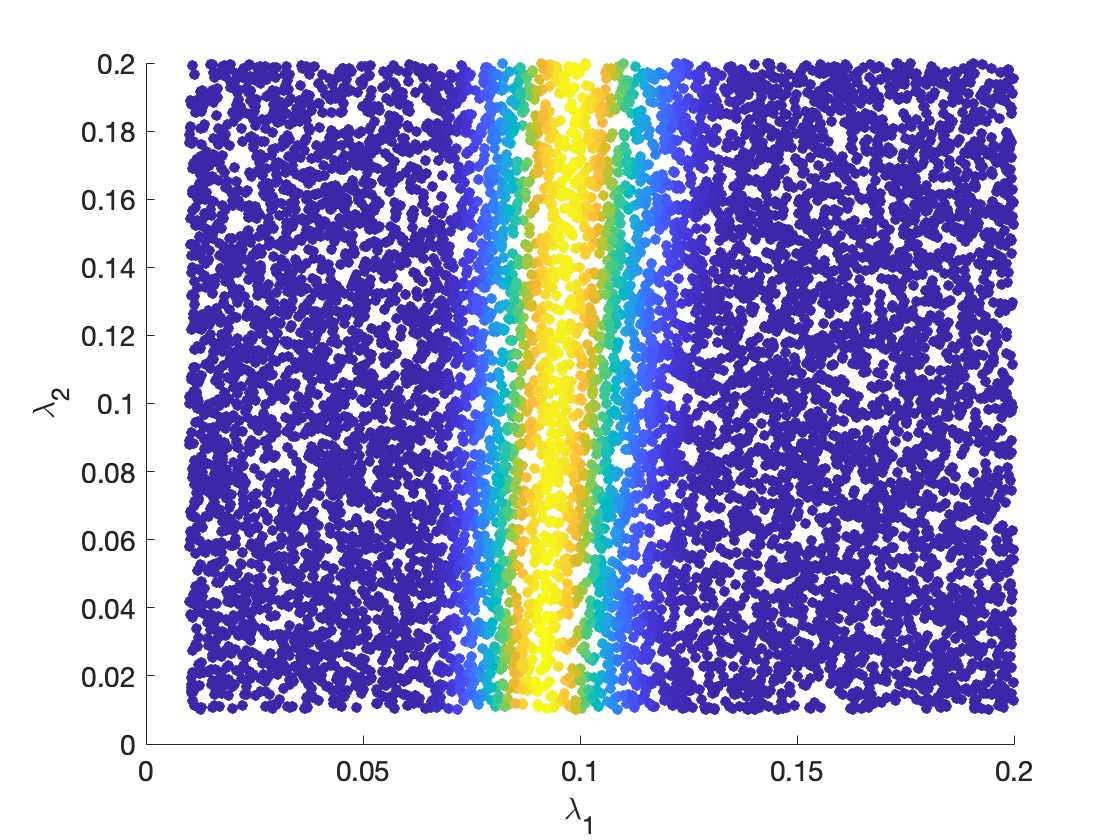}
\caption{The design corresponding to sensors at $x_0=0.175$ and $x_1=0.0$ is illustrated by the red ``X'' in the plots of the $ESE^{-1}$ (upper left) and $ESK^{-1}$ (upper right).  
The predicted QoI samples (lower left) and the corresponding updated density (lower right) show the impact of this design. The red dot in the lower left plot corresponds to the QoI evaluated at the midpoint of the parameter space and is used to define the mean of the Gaussian observed density with a covariance of $0.15 \mathbb{I}_{2}$.}
\label{fig:dci_suboed1}
\end{figure}
For this design, the sensor measurements define highly correlated QoI values resulting in an elongated and narrow prediction space as seen in the lower left of Fig.~\ref{fig:dci_suboed1}.  
The updated density is seen to have a large covariance in the $\lambda_2$-direction (the vertical direction) due to the lack of significant geometric information present in the components of this QoI map for this particular parameter. 

Finally, we consider the design at $(0.7, 0.3)$ which is once again shown as a black dot in the plots of the $ESE^{-1}$ (upper left) and the $ESK^{-1}$ (upper right) of Fig.~\ref{fig:dci_suboed2}.
For this design, the $ESE^{-1}$ value is relatively small while the $ESK^{-1}$ value is around 0.5.
\begin{figure}[htbp]
\centering
\includegraphics[width=0.45\textwidth]{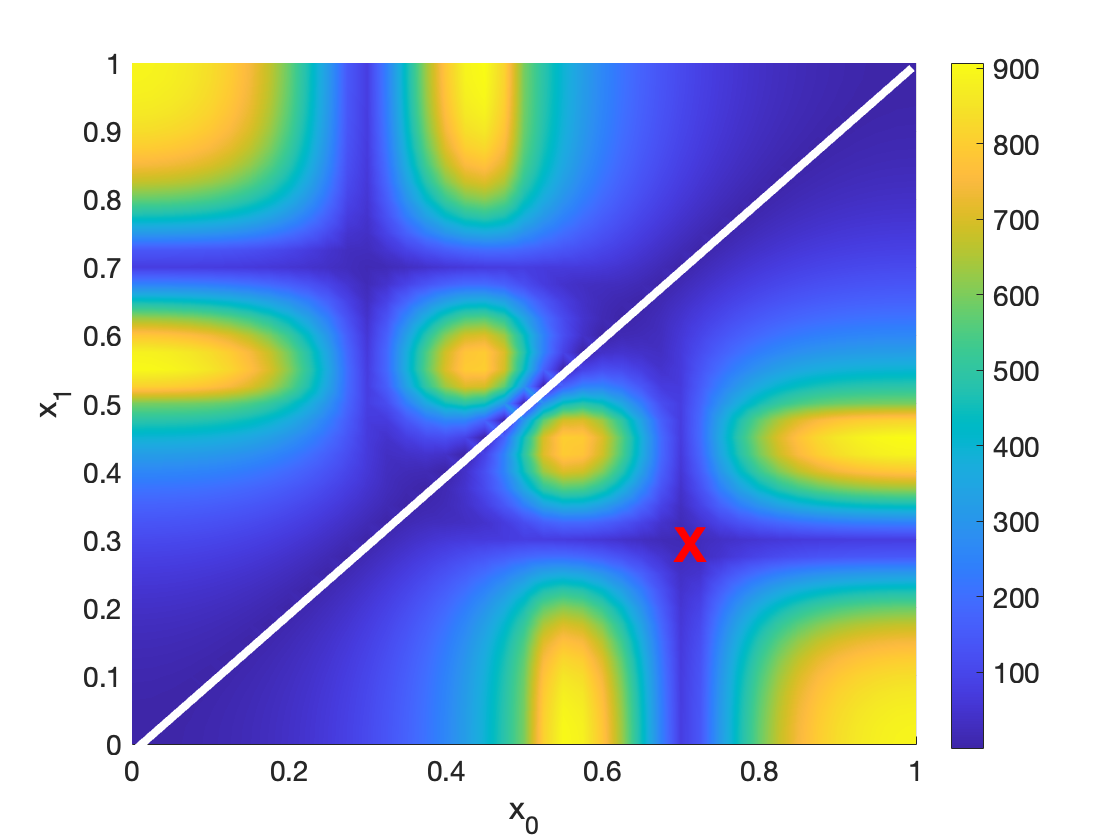}
\includegraphics[width=0.45\textwidth]{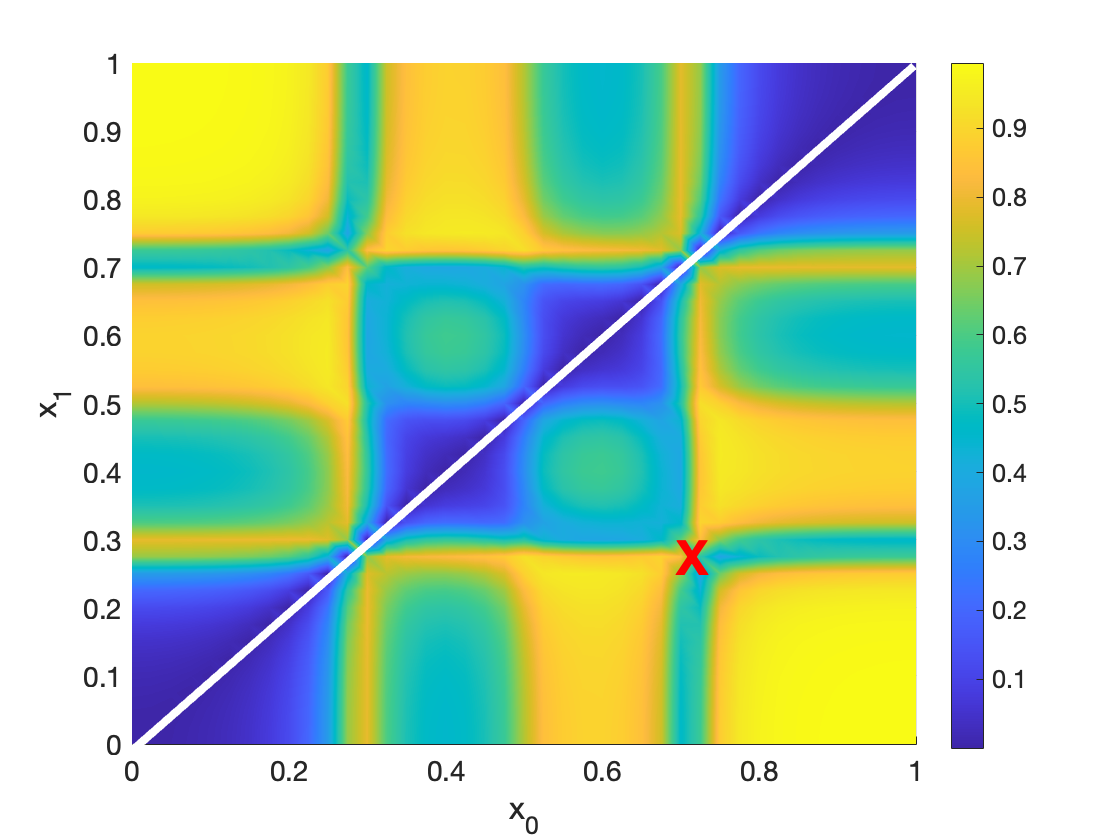}
\includegraphics[width=0.45\textwidth]{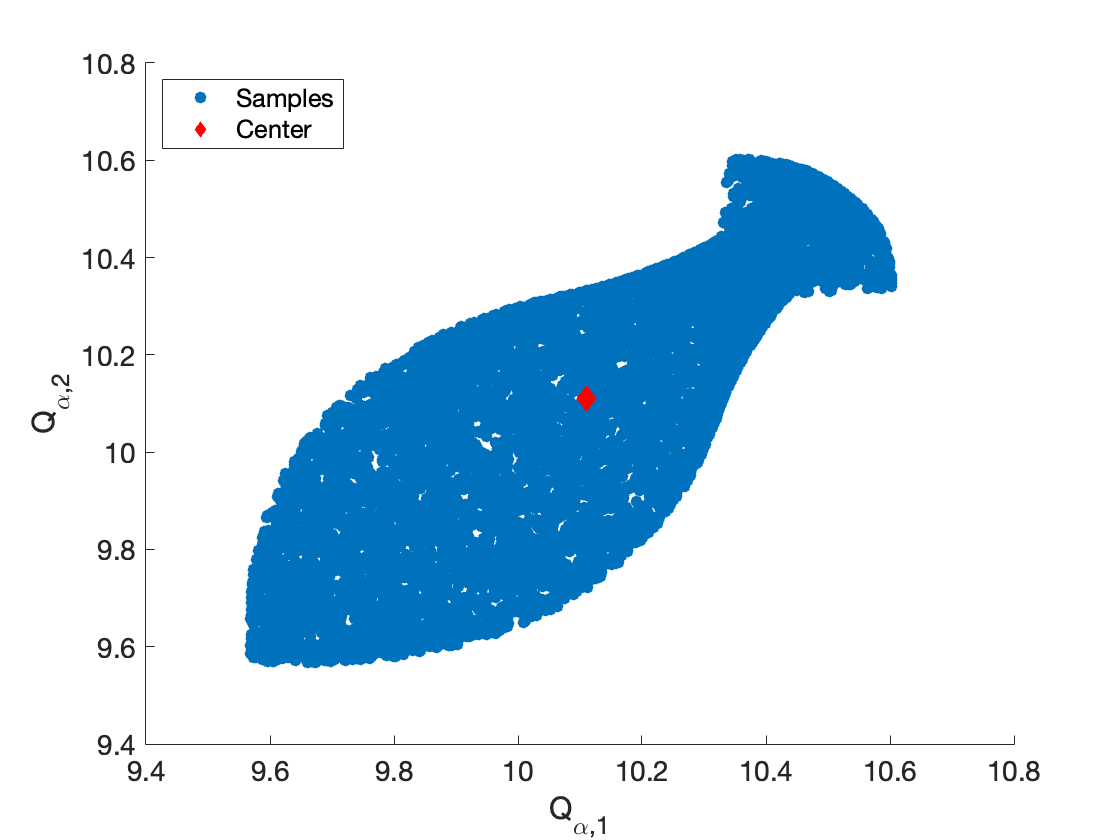}
\includegraphics[width=0.45\textwidth]{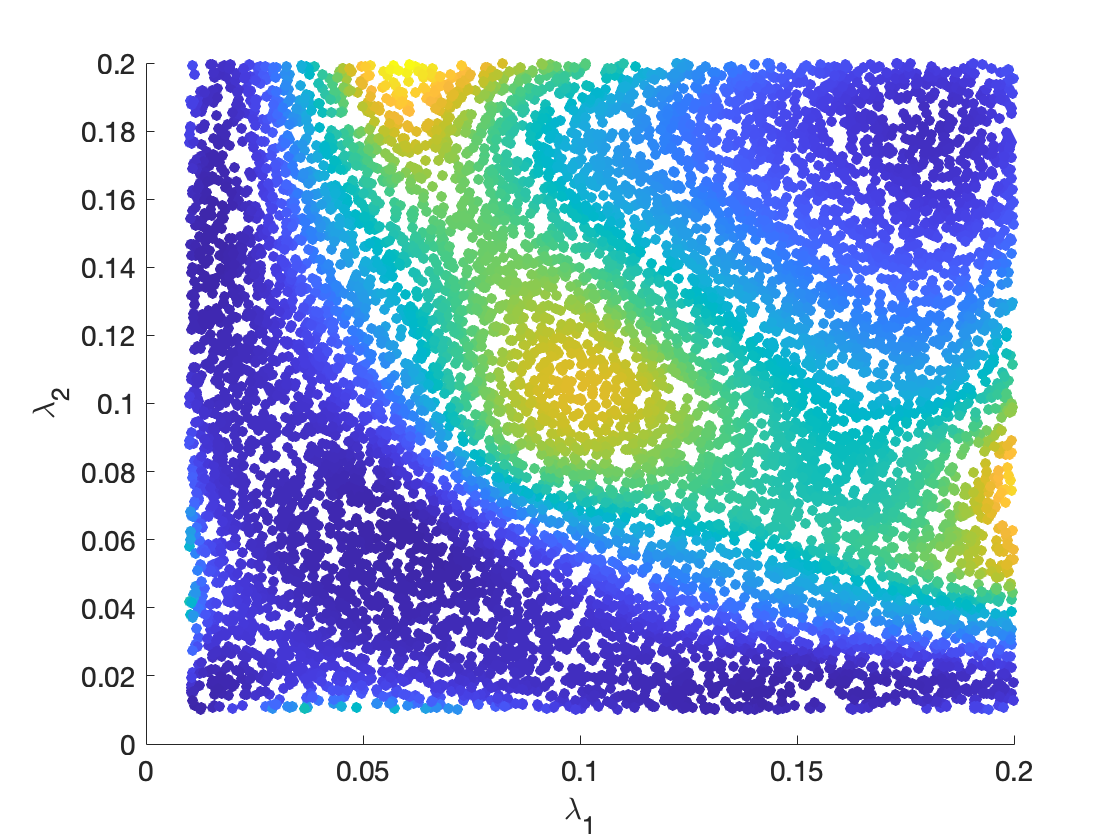}
\caption{The design corresponding to sensors at $x_0=0.7$ and $x=0.3$ is illustrated by the red ``X'' in the plots of the $ESE^{-1}$ (upper left) and $ESK^{-1}$ (upper right).  
The predicted QoI samples (lower left) and the corresponding updated density (lower right) show the impact of this design. The red dot in the lower left plot corresponds to the QoI evaluated at the midpoint of the parameter space and is used to define the mean of the Gaussian observed density with a covariance of $0.15 \mathbb{I}_{2}$.}
\label{fig:dci_suboed2}
\end{figure}
The predicted QoI samples (lower left) and the corresponding updated density (lower right) show the impact of this sub-optimal design.
The prediction space clearly has a more complex structure due to the nonlinear relationship between the QoI collected at these sensors with varying regions of correlation between the components of the QoI map.  
Correspondingly, the updated density in Fig.~\ref{fig:dci_suboed2} has a complex, multimodal structure.

The results in this section clearly demonstrate that choosing optimal or nearly optimal QoI maps for DCI based on either the $ESE^{-1}$ or $ESK^{-1}$ utility functions can have significant impacts on both the predicted QoI space as well as the associated updated densities.
These results are limited to the case where it is possible to optimize the entire QoI map (i.e., all of the component maps) simultaneously.
In the next section, we consider the greedy approach to incrementally optimize each new component of the QoI map. 

\section{Greedy Results Using ESK}
\label{sec:greedy_results}

We now utilize a model with a more complex problem with a higher-dimensional parameter space to illustrate the usefulness of the greedy OED algorithm defined in Algorithm~\ref{alg:greedy}.
We modify the problem in~\eqref{eq:cbayes_heatrod} by setting $\Omega = [0.0, 1.0]^2$ to model a two-dimensional metal plate, and we set $t_f=2$.
We also define the source $S(x)$ as the tensor product of the previous source in both dimensions but re-scaled to have a magnitude of 50. 
Here, we assume that the metal plate is manufactured by welding together nine square plates of the same shape and of similar alloy type.   
As before, we assume that due to the manufacturing process, the actual alloy compositions may differ slightly, leading to uncertain thermal conductivity properties ($\kappa$) in each of the nine regions of the plate, denoted by $\lambda = [\lambda_1, \dots, \lambda_9]$.
These regions and the associated parameters are shown in Fig.~\ref{fig:heatplate_omega}. 
We assume the same ranges of parameters as before leading to $\pspace=[0.01, 0.2]^9$.
\begin{figure}[htbp]
    \centering
    \scalebox{0.45}{\includegraphics{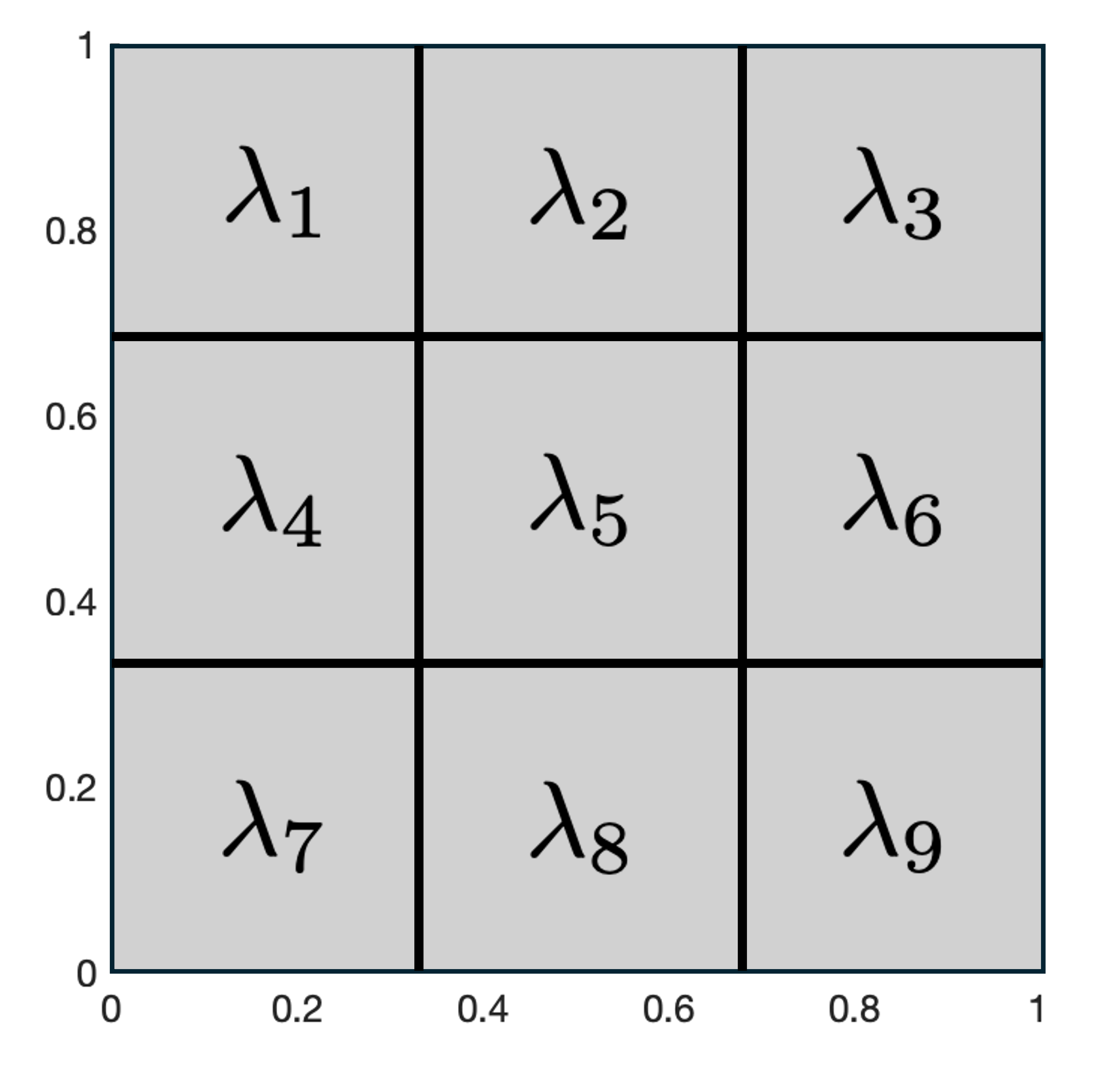}}
\caption{The domain $\Omega$ is constructed by welding together nine square plates, each of which has constant, but uncertain, thermal conductivity $\kappa$.}
\label{fig:heatplate_omega}
\end{figure}
Given a sample of thermal conductivity values for each $\lambda_i$, we approximated solutions to the PDE in Eq.~\eqref{eq:cbayes_heatrod} using MrHyDE with a $100 \times 100$ uniform rectangular mesh, piecewise linear finite elements, and a second-order implicit midpoint time stepping using 40 uniform time steps.

As in the previous section, we assume it is possible to place a thermometer and measure the solution at any location within the two-dimensional domain.
However, we now assume that we have nine thermometers to place.
Computationally, we used the the nodes of the mesh as a means of discretizing and indexing the design space for the sake of simplicity.
However, even with just $101^2 = 10,201$ potential observation locations for any given thermometer, this results in a design space of $9$-dimensional observable maps consisting of 
\[
{{10201}\choose{9}} = O(10^{30})
\] 
unique possible QoI maps.
Iterating over all possible maps to determine the OED is thus an infeasible computation.
We therefore utilized the greedy algorithm summarized in Algorithm~\ref{alg:greedy} to sequentially select measurement locations that are expected to add the most new geometric information to a given configuration of measurements.  

The space of all potential scalar-valued measurements defining $\Qspace^{(1)}$ in the computational experiments used the mesh points $(x_0^{(k)}, x_1^{(k)})\in\Omega$ to index $Q_k\in\Qspace^{(1)}$, i.e., 
\[
    Q_k=u\big((x_0^{(k)}, x_1^{(k)}); t_f\big),
\] 
for $1\leq k\leq 10201$ where $(x_0^{(k)},x_1^{(k)})$ denotes the potential location for a thermometer in the spatial domain. 
These $Q_k$ denote all of the possible first components of the final vector-valued QoI maps constructed by Algorithm~\ref{alg:greedy}. 
To estimate $ESE$, we generated 1,000 samples, 
\[
    \{\lambda^{(i)}\}_{i=1}^{1000},
\] 
in $\pspace$ and approximated
\[
    \big\{\{J_{\lambda^{(i)}, Q_{k}}\}_{i=1}^{1000}\big\}_{k=1}^{10201}
\] 
using a finite difference method with a perturbation of 1.0E-5.
In Fig.~\ref{fig:designspace}, we show $ESK^{-1}$ over the indexing set $\Omega$ for this current design space $\Qspace^{(1)}$.  
The optimal design, $Q_\text{opt}\in\Qspace^{(1)}$, is located at the center of the domain, near the location of the source.  

\begin{figure}[htbp]
    \centering
    {\includegraphics[width=0.45\textwidth]{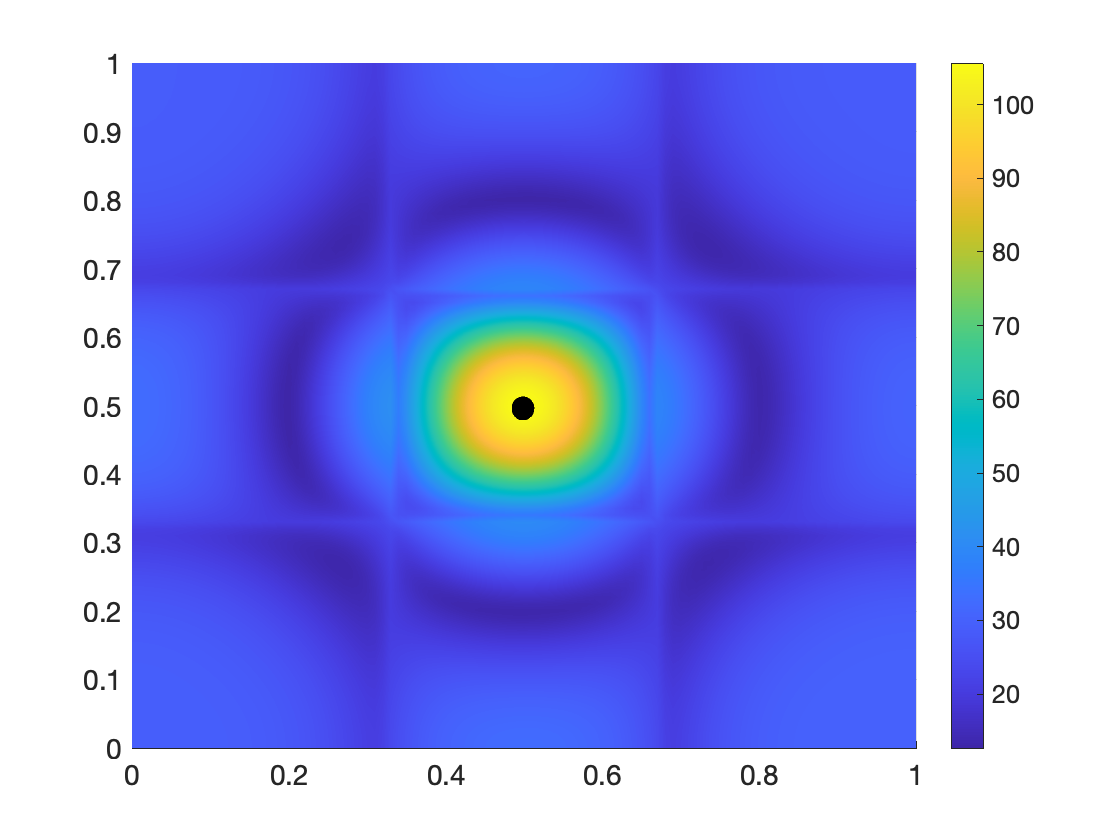}}
\caption{$ESK^{-1}$ over the indexing set for the design space $\Omega$.  
The location of the OED is shown in black, which is near the center of the domain where the source is primarily concentrated.}
\label{fig:designspace}
\end{figure}

Next, we consider $\Qspace^{(2)}$, the selection of the second contact thermometer when the first component is fixed at $Q_\text{opt}\in\Qspace^{(1)}$.  
We used the same discretization as above to compute Monte Carlo estimates of $ESK^{-1}$ for each $Q\in\Qspace^{(2)}$ and illustrate $ESK^{-1}$ over the indexing set $\Omega$ in the top left of Fig.~\ref{fig:designspace_d}.  
Through this iteration, the algorithm has defined a sequential approximation to the OED as a pair of sensors, one defined by $Q_\text{opt}\in\Qspace^{(1)}$ (now shown in white for contrast), and the other located in the upper right of the domain (shown in black).
There are in fact four locations (the four corners of the domain) for the second sensor placement that would produce a pair of sensors and subsequent QoI map with similar $ESK^{-1}$ values.
This is expected due to the symmetry of the problem and the precise choice of the four depends slightly on the sampling error.  
For the sake of space, we show without comment the iterative OED solutions to $\Qspace^{(d)}$ for $d=5, 7$, and $9$, in the top right, bottom left, and bottom right, respectively, of Fig.~\ref{fig:designspace_d}, where the previous measurement locations are always shown in white, but the newest sensor location is always shown in black.  

\begin{figure}[htbp]
    \centering
    {\includegraphics[width=.45\textwidth]{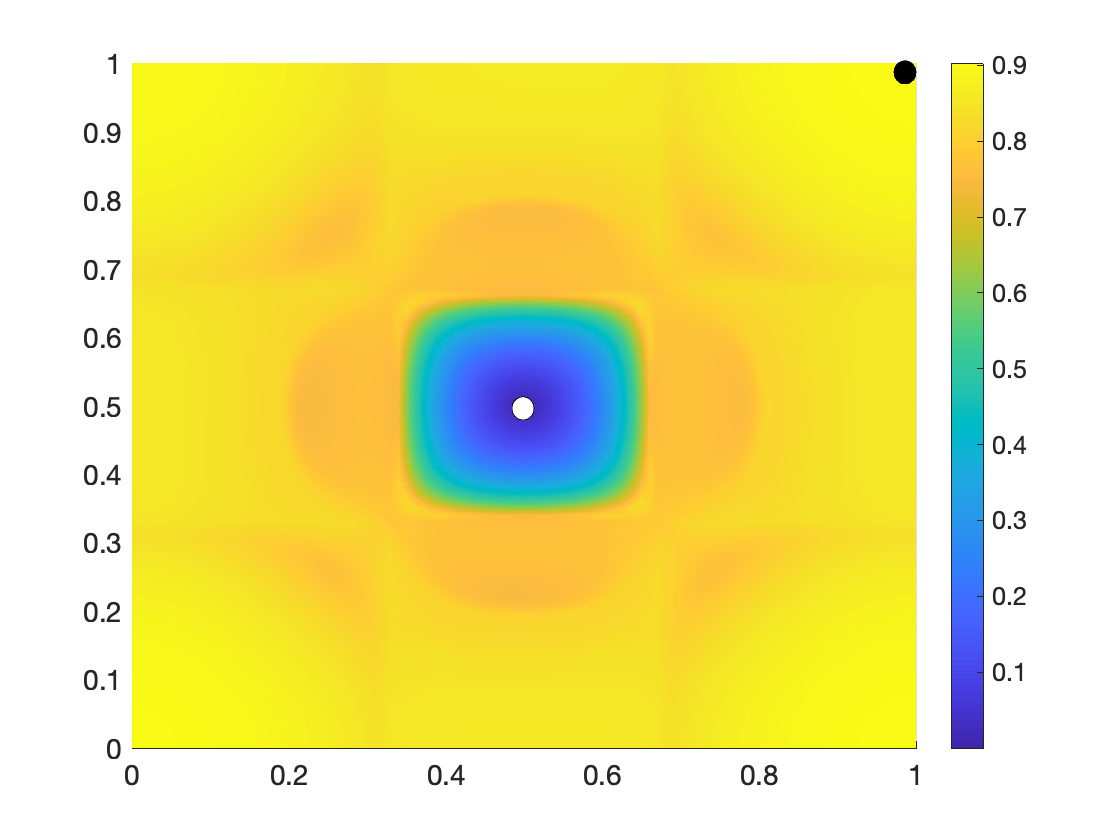}}
    {\includegraphics[width=.45\textwidth]{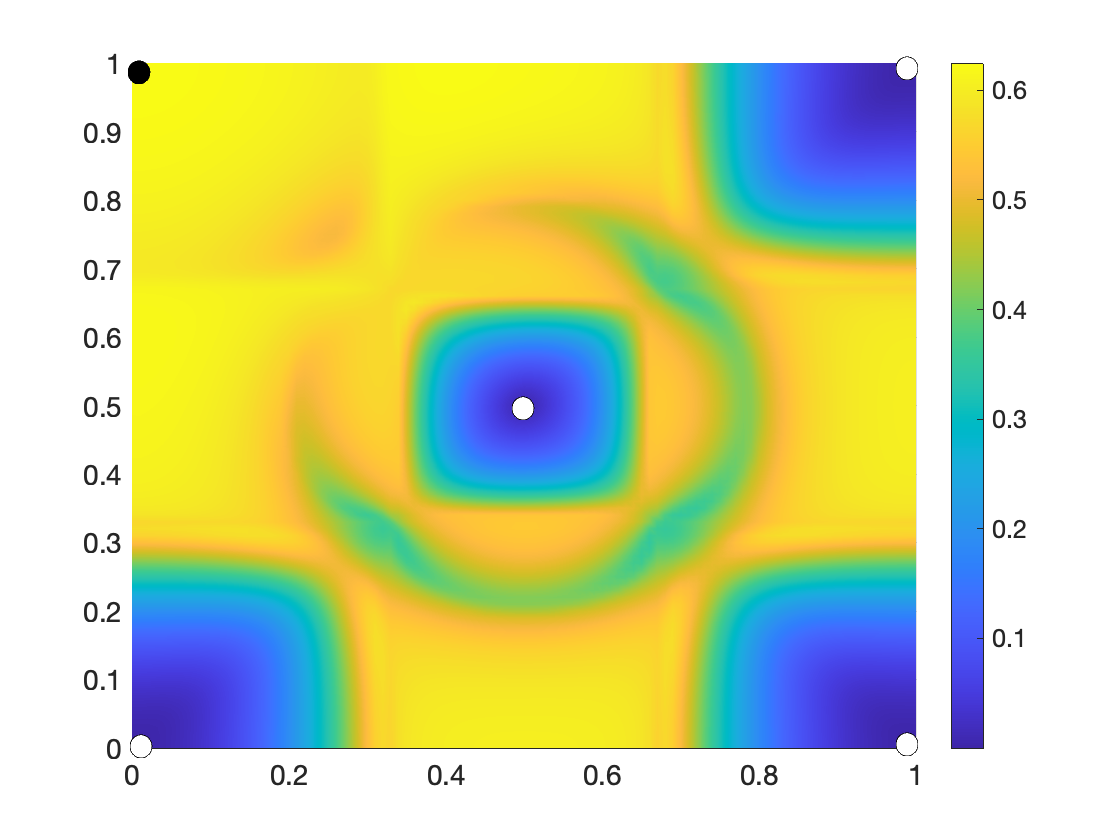}}
    {\includegraphics[width=.45\textwidth]{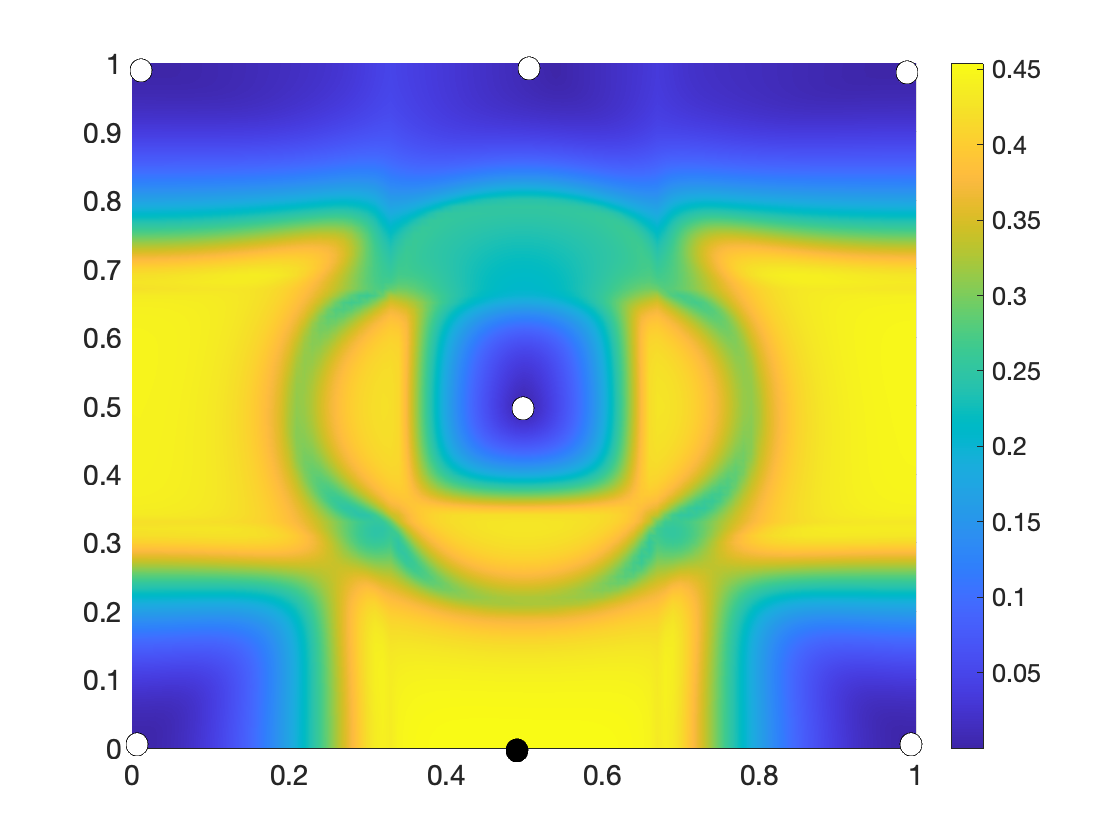}}
    {\includegraphics[width=.45\textwidth]{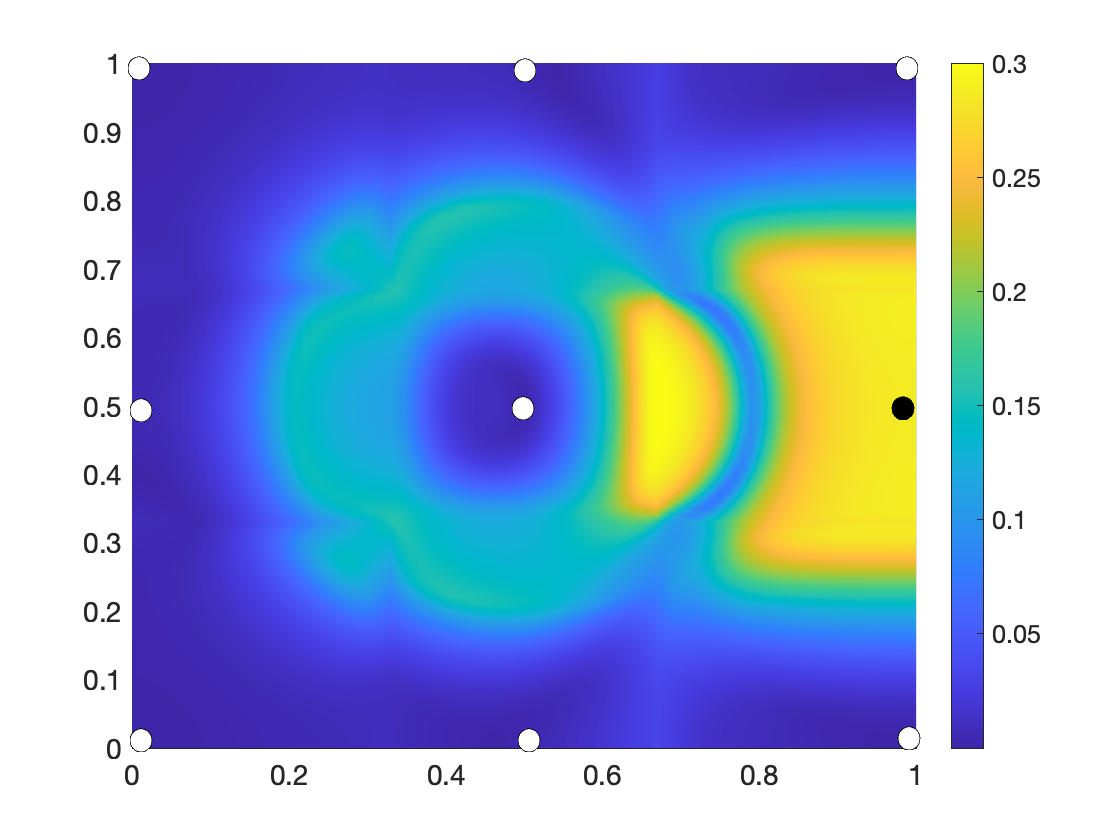}}
\caption{$ESK^{-1}$ for $\Qspace^{(d)}$ for $d=2$ (top left), $d=5$ (top right), $d=7$ (bottom left), and $d=9$ (bottom right) with the OED for $\Qspace^{(d-1)}$ shown as black dots and the OED for $\Qspace^{(d)}$ is shown as a white dot as defined by Algorithm~\ref{alg:greedy}.} 
\label{fig:designspace_d}
\end{figure}

The final configuration of sensors is both intuitive and interpretable while simultaneously providing a proof-of-concept of the greedy algorithm.
Specifically, intuition suggests each plate should receive a single sensor.
Furthermore, it is natural to assume that the sensor locations should be spaced as far apart from each other as possible to ensure that each associated component of the resulting QoI map is itself associated with data that are most influenced by the material properties of the particular plate in which the associated sensor is positioned.

\section{Conclusions}\label{sec:Conclusions}

In this work, we develop a general computational framework using singular values of sampled Jacobians of potential observable maps to define two optimal design criteria.
These design criteria are defined by distinct geometric properties of the maps referred to as skewness and scaling effects that describe how the maps invert sets of data. 
The use of singular values to compute these geometric properties was justified by theoretical results. 
We compared the impact of these design criteria on solutions to a stochastic inverse problem for a motivating problem describing the heating of a thin metal rod.
A greedy algorithm that combines the scaling and skewness effects was applied to a higher-dimensional example involving the welding together of nine different metal plates and produced an intuitive and interpretable result.

\section{Acknowledgments}

T.~Butler's work is supported by the National Science Foundation under Grant No.~DMS-2208460.
T.~Butler's work is also supported by NSF IR/D program while working at National Science Foundation. 
However, any opinion, finding, and conclusions or recommendations expressed in this material are those of the author and do not necessarily reflect the views of the National Science Foundation.

This paper describes objective technical results and analysis. Any subjective views or opinions that might be expressed in the paper do not necessarily represent the views of the U.S. Department of Energy or the United States Government.
Sandia National Laboratories is a multimission laboratory managed and operated by National Technology and
Engineering Solutions of Sandia, LLC., a wholly owned subsidiary of Honeywell International, Inc.,
for the U.S. Department of Energy’s National Nuclear Security Administration under contract DE-NA-0003525.

This material is based upon work supported by the U.S. Department of Energy, Office of Science,
Office of Advanced Scientific Computing Research, under contract 24-028431 and the Early Career Research Program.

\section{Data Availability Statement}

Details related to the software and discretization of the model problems utilized for generating the simulated data in the numerical results are provided in Section~\ref{S:Example}.
The scripts and input files needed to generate the results using MrHyDE can be found at:

\noindent https://github.com/sandialabs/MrHyDE/tree/main/scripts/DCI/OED-SVD


\bibliographystyle{plain}
\bibliography{ReferencesBib, oed4p}

\appendix

\section{Proof of Theorem~\ref{thm:local_inverse}}\label{app:dimension_assump_theorem}
\begin{proof}
We first prove the result for the linear case.
Assuming $Q$ is linear, then $Q(\lambda)=J_Q\lambda$ for a fixed $J_Q\in\mathbb{R}^{m\times n}$. 
If $m>n$, then there are at most $n$ rows of $J_Q$ that are linearly independent. 
If $m\leq n$, then there are most $m$ rows of $J_Q$ that are linearly independent.
In either case, let $r \leq n$ denote the number of rows of $J_Q$ that are linearly independent.
Basic results from linear algebra imply that $\dspace=Q(\pspace)\subset\mathbb{R}^d$ is defined by the column space of $J_Q$ and is given by a vector subspace of dimension $m$.
Thus, $\dspace$ is defined by an $r$-dimensional hyperplane embedded in $\mathbb{R}^m$, and $\dborel$ is a Borel $\sigma$-algebra that can be constructed by $r$-dimensional generalized rectangles defined with respect to an orthogonal set of $r$ vectors in $\mathbb{R}^m$. 

Let $\widehat{J}_Q$ denote any submatrix of $J_Q$ defined by choosing any $m$ linearly independent rows of $J_Q$, and denote by $\widehat{Q}(\lambda)$ the map into $\mathbb{R}^r$ defined by $\widehat{J}_Q\lambda$.
We similarly denote $\widehat{\dspace} = \widehat{Q}(\pspace)\subset\mathbb{R}^r$ and $\widehat{\dborel}$ as the Borel $\sigma$-algebra of $r$-dimensional sets in an $r$-dimensional space. 
Let $P$ denote a $r\times m$ projection matrix from $\dspace$ to $\widehat{\dspace}$. 
We abuse notation and let $PE$ denote the action of the matrix $P$ on all vectors in $E\in\dborel$, i.e., $PE\in\widehat{\dborel}$ is the projection of the event $E$ onto $\widehat{\dspace}$. 
Thus, for any $E\in \dborel$, there is a unique $\widehat{E}=PE\in\widehat{\dborel}$ representing the projection of $E$ onto $\widehat{\dspace}$.
The implication is that for any $E\in\dborel$, $Q^{-1}(E)=\widehat{Q}^{-1}(\widehat{E})\in\pborel.$

Now assume $Q$ is nonlinear but differentiable.
It follows that $Q$ is measurable so that $Q^{-1}(E)\in\pborel$.
Let $\epsilon>0$.
A standard result in measure theory (e.g., see Theorem~2.40 in \cite{Folland_book}) is that for any $A\in\pborel$, there exists a finite number, $K$, of generalized rectangles, $\set{A_k}_{1\leq k\leq K}$, such that if $\Delta$ denotes the symmetric difference between $A$ and the union of the $A_k$, then $\pmeas\left(\Delta\right)<\epsilon$.
Let $A=Q^{-1}(E)$ and apply such a result. 
For each $1\leq k\leq K$, denote a choice of $\lambda\in A_k$ as $\lambda^{(k)}$ and let $\tilde{Q}$ denote an $m$-dimensional piecewise linear map defined by
\begin{equation*}
    \tilde{Q}(\lambda) = \sum_{k=1}^K \left(Q(\lambda^{(k)})+J_{Q}(\lambda^{(k)})\lambda\right)\mathbf{I}_{A_k}(\lambda),
\end{equation*} 
where $\mathbf{I}_{A_k}(\lambda)$ denotes the indicator function on $A_k$ for each $1\leq k\leq K$. 
By the constant rank assumption, we can define $\widehat{Q}$ as the $r$-dimensional piecewise linear map constructed from $\tilde{Q}$ by taking $r\times n$ submatrices of $J_{Q}(\lambda^{(k)})$ and the same $r$ rows of $Q(\lambda^{(k)})$ for each $1\leq k\leq K$. 
Finally, let 
\begin{equation*}
    \widehat{E}_k = \widehat{Q}(A_k), \text{  for each } 1\leq k\leq K.
\end{equation*}
The conclusion follows by the choice of $A_k$ and the properties of $\widehat{Q}$.
\end{proof}
\section{Proofs for Scaling and Skewness Results}\label{app:scaling-skewness-proofs}

Below, we prove Lemma~\ref{lem:prod_singvals}, which stated that the volume of an embedded parallelepiped defined implicitly by the rows of a full rank $d\times n$ matrix is given by the product of singular values of $J$.
\begin{proof}
The singular values of $J$ are equal to the singular values of $J^{\top}$.  Consider the reduced QR factorization of $J^{\top}$,
\begin{equation}
    J^{\top} = \tilde QR,
\end{equation}
where $\tilde Q$ is $n\times m$ and $R$ is $m\times m$.  
By the properties of the QR factorization, we know the singular values of $R$ are the same as the singular values of $J^{\top}$.  Let $\mbf{x}\in\mathbb{R}^m$, then
\begin{equation}
	||\tilde Q\mbf{x}||^2=(\tilde Q\mbf{x})^{\top}(\tilde Q\mbf{x})=\mbf{x}^{\top}\tilde Q^{\top}\tilde Q\mbf{x}=\mbf{x}^{\top}\mbf{x}=||\mbf{x}||^2,
\end{equation}
so $\tilde Q$ is an isometry.
This implies the Lebesgue measure of the parallelepiped defined by the rows of $R$ is equal to the Lebesgue measure of the parallelepiped defined by the columns of $J^{\top}$, or the rows of $J$,
\begin{equation}
    \mu_d(Pa(J)) = \mu_d(Pa(R)) = \prod_{k=1}^m \gamma_k = \prod_{k=1}^m \sigma_k,
\end{equation}
where $\set{\gamma_k}_{1\leq k\leq m}$ are the singular values of $R$ and $\set{\sigma_k}_{1\leq k\leq m}$ are the singular values of $J$.
\end{proof}

Below, we prove Corollary~\ref{cor:cross_section} stating that the volume of an embedded parallelepiped defined by the cross-section of the pre-image of a $d$-dimensional unit cube under $d\times n$ full-rank matrix $J$ is given by the inverse of the product of the singular values of $J$. 
\begin{proof}
Consider the pseudo-inverse of $J$
$$
	J^\dagger = J^{\top}(JJ^{\top})^{-1}.
$$
From this equation, it is clear that the column space of $J^\dagger$ is equal to the row space of $J$.  The row space of $J$ defines a subspace orthogonal to the $n$-dimensional cylinder that is the pre-image of a unit cube under $J$.  Therefore, the column space of $J^\dagger$ is orthogonal to the pre-image cylinder and $Pa((J^\dagger)^{\top})$ is an $m$-dimensional parallelepiped defining an orthogonal cross-section of this cylinder.

From basic results in linear algebra, the singular values of $(J^\dagger)^{\top}$ are equal to those of $J^\dagger$.
Then, from properties of the pseudo-inverse, the singular values of $J^\dagger$ are the inverse of the singular values of $J$.  
Finally, from Lemma~\ref{lem:prod_singvals}, it follows that
\begin{equation}
    \mu_d(Pa((J^\dagger)^{\top})) = \left(\prod_{k=1}^{m} \, \sigma_{k}\right)^{-1},
\end{equation}
where $\set{\sigma_k}_{k=1}^m$ are the singular values of $J$.
\end{proof}

Below is the proof for Corollary~\ref{Cor:singular_values}.

\begin{proof}
From the definition of skewness and Theorem~\ref{thm:fundamental_decomp}, we have
\begin{equation}
    SK_Q(\lambda) = \max_{1\leq k\leq m} \, \frac{\|\mbf{j}_k\|}{\|\mbf{j}_k^{\perp}\|} = \max_{1\leq k\leq m} \, \frac{\|\mbf{j}_k\|\mu_{m-1}(Pa(J_{k}(\lambda)))}{\mu_m(Pa(J({\lambda})))}.
\end{equation}
Applying Lemma~\ref{lem:prod_singvals} finishes the proof.
\end{proof}





\end{document}